	\numberwithin{equation}{section} 
\def\Xint#1{\mathchoice
{\XXint\displaystyle\textstyle{#1}}%
{\XXint\textstyle\scriptstyle{#1}}%
{\XXint\scriptstyle\scriptscriptstyle{#1}}%
{\XXint\scriptscriptstyle\scriptscriptstyle{#1}}%
\!\int}
\def\XXint#1#2#3{{\setbox0=\hbox{$#1{#2#3}{\int}$}
\vcenter{\hbox{$#2#3$}}\kern-.5\wd0}}
\def\pvint{\Xint-}
\newcommand{\aR}{a^\mathrm{R}}
\newcommand{\taR}{\tilde{a}^\mathrm{R}}
\newcommand{\aI}{a^{\mathrm{I}}} 
\newcommand{\ssum}[3]{\sum_{{#1}\neq {#2}}^{#3}}
\newcommand{\pprod}[3]{\prod_{{#1}\neq {#2}}^{#3}}
\newcommand{\sssum}[4]{\sum_{{#1}\neq {#2},{#3}}^{#4}}
\newcommand{\ee}{{\rm e}}
\newcommand{\ii}{{\rm i}}
\newcommand{\R}{{\mathbb R}}
\newcommand{\C}{{\mathbb C}}
\newcommand{\Z}{{\mathbb Z}}
\newcommand{\cN}{{\mathcal  N}}
\newtheorem{theorem}{Theorem}[section]
\newtheorem{lemma}[theorem]{Lemma}
\newtheorem{proposition}[theorem]{Proposition}
\theoremstyle{definition}
\theoremstyle{remark}
\newtheorem{remark}[theorem]{Remark}
\newcommand{\bb}{\mathbf{b}}
\newcommand{\bc}{\mathbf{c}}
\newcommand{\be}{\mathbf{e}}
\newcommand{\bm}{\mathbf{m}}
\newcommand{\bn}{\mathbf{n}}
\newcommand{\bmb}{\mathbf{m}_0}
\newcommand{\bs}{\mathbf{s}}
\newcommand{\bt}{\mathbf{t}}
\newcommand{\bx}{\mathbf{x}}
\newcommand{\by}{\mathbf{y}}
\newcommand{\bv}{\mathbf{v}}
\newcommand{\bw}{\mathbf{w}}
\newcommand{\bz}{\mathbf{z}}
\newcommand{\bS}{\mathbf{S}}
\newcommand{\re}{\mathrm{Re}\,}
\newcommand{\im}{\mathrm{Im}\,}
\title{\Large{Multi-solitons of the half-wave maps equation and Calogero-Moser spin-pole dynamics}}
\date{\vspace{-0.5cm}\small\today\vspace{-0.5cm}}
\author[1]{Bjorn K. Berntson}
\author[2]{Rob Klabbers}
\author[3]{Edwin Langmann}
\affil[1]{Department of Mathematics, KTH Royal Institute of Technology, SE-100 44 Stockholm, Sweden}
\affil[2]{\textsc{Nordita}, KTH Royal Institute of Technology and Stockholm University, SE-106 91 Stockholm, Sweden}
\affil[3]{Department of Physics, KTH Royal Institute of Technology, SE-106 91 Stockholm, Sweden}
\date{\today}
\begin{document}

\begin{flushright}

	\footnotesize
	\textsc{nordita} 2020-069
\end{flushright}

\bigskip

{\let\newpage\relax\maketitle}

\maketitle

\newcommand{\E}[1]{{\color{blue}{#1}}}

\begin{abstract}
We consider the half-wave maps (HWM) equation which provides a 
 continuum description of the classical Haldane-Shastry spin chain on the real line. 
We present exact multi-soliton solutions of this equation. 
Our solutions describe solitary spin excitations that can move with different velocities and interact in a non-trivial way. 
We make an ansatz for the solution allowing for an arbitrary number of solitons, each described by a pole in the complex plane and a complex spin variable, 
and we show that the HWM equation is satisfied if these poles and spins evolve according to the dynamics of an exactly solvable spin Calogero-Moser (CM) system with certain constraints on initial conditions.
We also find first order equations providing a B\"acklund transformation of this spin CM system, generalize our results to the periodic HWM equation, 
and provide plots that visualize our soliton solutions. 
\end{abstract} 

\section{Introduction} 

One powerful method to  describe systems with a large number of interacting degrees of freedom is to take a hydrodynamic limit \cite{spohn1991}. 
Since hydrodynamic limits are difficult to perform in practice, one often has to resort to phenomenological hydrodynamic equations. 
However, within the class of integrable systems, there are important exceptions: examples for which precise hydrodynamic descriptions are known.
Calogero-Moser (CM) systems are prominent such examples which have hydrodynamic descriptions closely related to soliton equations of Benjamin-Ono type \cite{polychronakos1995,stone2008,abanov2009,kulkarni2017,berntson2020}.
The relation between integrable systems and hydrodynamic equations has recently received considerable interest in the context of non-equilibrium physics; see \cite{doyon2019,spohn2019} and references therein.

In this paper we present basic results for a soliton equation that was derived as a 
 continuum limit of a classical version of the Haldane-Shastry spin chain \cite{zhou2015} and which is known as the half-wave maps (HWM) equation \cite{lenzmann2018,lenzmann2018b}. 
The HWM equation describes the time evolution of a spin density in one dimension represented by a $S^2$-valued function $\bm(x,t)$ depending on a spatial variable $x\in\R$ and time $t \in \R$,\footnote{We find it convenient to treat the time evolution both to $t>0$ and $t<0$ on equal footing.} and it is given by\footnote{By $\bm\in S^2$ we mean $\bm\in\R^3$ with the constraint 
$\bm^2=1$; $\wedge$ is the usual cross product of three-vectors.}
\begin{equation} 
\label{hwm}
\bm_t = \bm\wedge H\bm_x
\end{equation} 
with $\bm_t=\frac{\partial}{\partial t}\bm$ etc., and $H$ the Hilbert transform: 
\begin{equation} 
\label{hilbert_line}
(Hf)(x) \coloneqq \frac1\pi \pvint_{\R} \frac{f(x')}{x'-x}\,  \mathrm{d}x'
\end{equation} 
with $\pvint$ a Cauchy principal value integral (for simplicity and to be specific, we restrict our discussion here to the HWM equation on the real line; however, as discussed further below, we also have results for the  periodic HWM equation). The Haldane-Shastry spin chain can be obtained from a spin CM system in a limit where the particle positions are frozen to a lattice \cite{polychronakos1993} and, for this reason,  we regard the HWM equation as limiting case of a hydrodynamic equation describing a spin CMS system.
 
Since the HWM equation arises as a continuum limit of an integrable system, one expects that it is integrable. However, while some results in this direction are known, there are still gaps in our understanding of this equation. More specifically, while a Lax pair of the HWM equation was recently found in  \cite{gerard2018}, only a restricted class of soliton solutions is known \cite{zhou2015,lenzmann2018,lenzmann2018b}. This class consists of solutions with an arbitrary number, $N$, of solitons, all moving with the same constant velocity and without interactions. Numerical results \cite{zhou2015} suggest that there exist more general soliton solutions where the individual solitons move with different velocities and interact in a non-trivial way. Our main result is exact analytic formulas for such general multi-soliton solutions of the HWM equation.

We now describe our main result (a precise formulation is given in Theorem \ref{thm}). 
We make the ansatz
\begin{equation}\label{ansatz}
\bm(x,t)=\bmb+\ii\sum_{j=1}^N \frac{\bs_j(t)}{x-a_j(t)}-\ii\sum_{j=1}^N \frac{\bs_j(t)^*}{ x-a_j(t)^*}
\end{equation}
with $*$ complex conjugation, $\bmb\in S^2$ describing an arbitrary vacuum solution (i.e., a solution that is constant in space and time), $a_j(t)$ poles in the upper half complex plane $\C_+$, and $\bs_j(t)$ spin variables with values in $\C^3$, and we show that this ansatz gives a solution of the HWM equation \eqref{hwm}--\eqref{hilbert_line} provided the following equations hold true:
\begin{equation}\label{resultsj1} 
\dot{\bs}_j=- 2 \ssum{k}{j}{N} \frac{\bs_j\wedge\bs_k}{(a_j-a_k)^2},
\end{equation} 
\begin{equation}\label{resultaj1}
\dot{a}_j\bs_j=-\bs_j \wedge    \Bigg( \ii\bmb-\ssum{k}{j}{N}\frac{\bs_k}{a_j-a_k}+\sum_{k=1}^N\frac{\bs_k^*}{a_j-a_k^*} \Bigg),
\end{equation}
\begin{equation}\label{constraint}
\bs_j^2=0, \quad \bs_j\cdot \bigg(\ii \bmb-\ssum{k}{j}{N} \frac{\bs_k}{a_j-a_k}+\sum_{k=1}^N \frac{\bs_k^*}{a_j-a_k^*} \bigg)=0 
\end{equation}
for $j=1,\ldots, N$. 
We also show that equations \eqref{resultsj1}--\eqref{constraint} provide a B\"acklund transformation for a known spin CM system in the sense of Wojciechowski \cite{wojciechowski1982}: they imply
\begin{equation}
\label{eq:ddota_rational1}
\ddot{a}_j=4\ssum{k}{j}{N} \frac{\bs_j\cdot\bs_k}{(a_j-a_k)^3}    
\end{equation}
for $j=1,\ldots,N$,  and \eqref{resultsj1} and \eqref{eq:ddota_rational1} are the equations of motion of an exactly solvable spin CM system solved in \cite{gibbons1984generalisation,krichever1995spin}. 

It is important to note that, if \eqref{resultaj1} and \eqref{constraint} hold true at initial time $t=0$, they are fulfilled for all times $t\in\R$ provided $\bs_j(t)$ and $a_j(t)$ time evolve according to \eqref{resultsj1} and \eqref{eq:ddota_rational1}; for this reason, \eqref{resultaj1} and \eqref{constraint} are constraints on initial conditions: the dynamics is given by the spin CM system. Moreover, for fixed $N$ and $\bmb$, the constraints in \eqref{constraint} at time $t=0$ allow for solutions parametrized by $4N$ real parameters: the initial pole positions $a_{j,0} \coloneqq a_{j}(0)$ in the upper half complex plane, and directions $\mathbf{n}_{j}\in S^2$ determining the initial complex spins $\bs_{j,0} \coloneqq \bs_{j}(0)$, for $j=1,\ldots,N$. As we will show in Section~\ref{sec:travel}, the previously known traveling wave solutions \cite{zhou2015,lenzmann2018,lenzmann2018b} correspond to the special case where $\mathbf{n}_{j}=\mathbf{n}$ is the same for all $j$. We also present different methods to find initial data satisfying the constraints as well as plots of our solutions which demonstrate that our soliton solutions can describe complicated spin interactions. 

For simplicity, we restricted our discussion above to the HWM equation on the real line. There is also a periodic version of the HWM equation which is integrable as well \cite{zhou2015,lenzmann2018,gerard2018}. 
Our main result stated above straightforwardly generalizes to the periodic case.

In the main body of this paper, we restrict ourselves to $S^2$-valued solutions, which are the most relevant for physics. However, our results straightforwardly generalize to the case where $\bm$ is $\C^3$-valued with $\bm^2$ fixed to an arbitrary constant complex value:
all the proofs in the Appendices are done for this more general case.

The plan of this paper is as follows. In Section \ref{sec2} we derive our $N$-soliton solutions of the HWM equation on the real line, explaining the key steps and deferring computational details to appendices. The generalizations of these results to the periodic case can be found in Section \ref{sec3}. 
In Section \ref{sec:constraints}, we discuss several methods to find initial data of multi-solitons and present visualizations of examples of our solutions. We end with concluding remarks in Section \ref{sec:discussion}. 
Details of the proofs can be found in the Appendices. \\

\noindent {\bf Notation:} We write $\sum_{k\neq j}^N$ short for sums $\sum_{\genfrac{}{}{0pt}{}{k=1}{k\neq j}}^N$ etc.  
Vectors $\bm$ in $\R^3$ or $\C^3$ are written as $\bm=(m_1,m_2,m_3)$, and $\bm\wedge\bn =(m_2n_3-m_3n_2,m_3n_1-m_1n_3,m_1n_2-m_2n_1)$. 
 We denote as $\C_+$ the complex upper half plane: $\C_+\coloneqq\{ z\in\C  |  \im z>0\}$, and similarly, $\C_-$ is the complex lower half plane.  

\section{Multi-soliton solutions on the real line}
\label{sec2}
We derive the multi-soliton solutions of the HWM equation on the real line governed by the spin-pole dynamics described in the introduction (Section~\ref{sec:derivation1}), and we show that this dynamics can be derived from a known  spin CM system (Section~\ref{sec:CS_dynamics}). We also explain how to recover the known traveling wave solutions as a special case of our multi-soliton solutions (Section~\ref{sec:travel}), and we discuss the solution of the constraints on initial conditions (Section~\ref{sec:constraints1}). 

\subsection{Spin-pole dynamics}
\label{sec:derivation1} 
We find it convenient to use the notation 
\begin{equation} 
\label{alphaV} 
\alpha(x) \coloneqq \frac1x ,\quad V(x)\coloneqq \frac1{x^2}
\end{equation} 
and 
\begin{equation}
\label{absr}
(a_j,\bs_j,r_j)=\begin{cases}
(a_j,\bs_j,+) & (j=1,\ldots,N), \\
(a_{j-N}^*,\bs_{j-N}^*,-) & (j=N+1,\ldots 2N). 
\end{cases}
\end{equation}
Note that $V(x)$ is the famous interaction potential in the rational CM model, and $\alpha(x)$ is the associated special function satisfying 
$V(x)=\alpha(x)^2=-\alpha'(x)$ and a functional identity given in \eqref{Id} below \cite{calogero1975}. 

Using this, the pole ansatz in \eqref{ansatz} can be written as 
\begin{equation}\label{ansatzproof1}
\bm(x,t)=\bmb+\ii\sum_{j=1}^{2N} r_j\bs_j(t) \alpha(x-a_j(t))
\end{equation}
where $\bmb$ is an arbitrary constant vector in  $\R^3$ satisfying  $\bmb^2=1$. 
Note that, since we assume that the poles $a_j(t)$ for $j=1,\ldots,N$ are in the complex upper half-plane, $r_j$ equals the sign of the imaginary part of $a_j(t)$ for $j=1,\ldots,2N$.  

In general, the pole ansatz in \eqref{ansatz} gives a function $\bm(x,t)\in \R^3$ and thus, to find a solution of the HWM equation,  it is important to find conditions that constrain to $\bm(x,t)^2=1$. We therefore compute, using \eqref{ansatzproof1}, 
\begin{align*}
\bm^2= \bmb^2+2\ii  \bmb\cdot \sum_{j=1}^{2N}r_j\bs_j\alpha(x-a_j)-\sum_{j=1}^{2N}\sum_{k=1}^{2N}r_jr_k (\bs_j\cdot\bs_k)\alpha(x-a_j)\alpha(x-a_k) 
\end{align*}
(here and in the following, we write $\bm$ short for $\bm(x,t)$, and we suppress the time dependence of $\bs_j$ and $a_j$, to simplify notation). 
Inserting $\bmb^2=1$ and evaluating the double sum using $\alpha(x-a_j)^2=V(x-a_j)$ for $k=j$ and  
\begin{equation}
\label{Id} 
\alpha(x-a_j)\alpha(x-a_k)=\alpha(a_j-a_k)\big(\alpha(x-a_j)-\alpha(x-a_k)\big)
\end{equation}
for $k\neq j$,  we find
\begin{align*}
\bm^2=
1 -\sum_{j=1}^{2N} \bs_j^2 V(x-a_j) + 2\sum_{j=1}^{2N}\alpha(x-a_j)r_j\bs_j\cdot \bigg( \ii\bmb -\ssum{k}{j}{2N}r_k\bs_k\alpha(a_j-a_k) \bigg) 
\end{align*}
(computational details are given in Appendix~\ref{lemma1proof}). 
Because the functions $\alpha(x-a_j)$ and $V(x-a_j)$ are linearly independent, their coefficients must vanish if $\bm^2=1$: 
\begin{equation}\label{constraint22}
\bs_j^2=0,\quad \bs_j\cdot \bigg( \ii\bmb -\ssum{k}{j}{2N}r_k\bs_k\alpha(a_j-a_k) \bigg)=0
\end{equation} 
for $j=1,\ldots,2N$. Inserting \eqref{alphaV}--\eqref{absr}, one obtains the conditions in \eqref{constraint} for $j=1,\ldots,N$ (and the complex conjugate thereof for $j=N+1,\ldots,2N$). 
To summarize:  {\em The conditions in \eqref{constraint} are necessary and sufficient for the pole ansatz \eqref{ansatz} to satisfy $\bm^2=1$.} 

Because the HWM equation \eqref{hwm}--\eqref{hilbert_line}  is length-preserving: 
\begin{equation} 
\label{eq:length_preserving}
\partial_t\bm^2=2\bm_t\cdot\bm=2(\bm\wedge H\bm_x)\cdot\bm=0,
\end{equation} 
choosing initial values for $a_j$ and $\bs_j$ that satisfy \eqref{constraint} at time $t=0$ is sufficient to guarantee $\bm\in S^2$ at future time. 
We suppose these initial values have been appropriately chosen and seek the differential equations governing their evolution. 

We substitute the pole ansatz \eqref{ansatzproof1} into \eqref{hwm}--\eqref{hilbert_line}  to find, on the left-hand side,
\begin{equation}\label{mtrational}
\bm_t=\ii\sum_{j=1}^{2N} r_j\big(\dot{\bs}_j \alpha(x-a_j)+\bs_j \dot{a}_j V(x-a_j)\big).
\end{equation}
To compute the right-hand side of \eqref{hwm}--\eqref{hilbert_line}, we recall that the (boundary values of) functions analytic in the upper and lower half-planes are eigenfunctions of the Hilbert transform with eigenvalues $\pm \ii$, respectively. It follows that  $H\alpha(x-a_j)=-\ii r_j \alpha(x-a_j)$ and
\begin{align*}
H\bm_x=\partial_x (H\bm)= \ii\partial_x \bigg( \sum_{j=1}^{2N} (-\ii r_j) r_j\bs_j \alpha(x-a_j)       \bigg) =-\sum_{j=1}^{2N}  \bs_j V(x-a_j),
\end{align*}
using $\alpha'(x-a_j)=-V(x-a_j)$ and the fact that $H$ commutes with $\partial_x$. 
Inserting this into the right-hand side of \eqref{hwm}--\eqref{hilbert_line}  and using the identity obtained by differentiating the one in \eqref{Id} with respect to $a_k$ we find, by straightforward computations, 
\begin{align}
\bm\wedge H\bm_x=
&-\ii \sum_{j=1}^{2N}  V(x-a_j)\bs_j\wedge \bigg( \ii \bmb - \sum_{k=1}^{2N}r_k \bs_k \alpha(a_j-a_k)       \bigg) \label{mhmxrational}\\
&-\ii\sum_{j=1}^{2N}\ssum{k}{j}{2N}  \alpha(x-a_j)(r_j+r_k) (\bs_j\wedge\bs_k) V(a_j-a_k) \nonumber
\end{align}
(computational details can be found in Appendix~\ref{theorem1proof}). 
Since the functions $\alpha(x-a_j)$ and $V(x-a_j)$ are linearly independent, \eqref{mtrational} and \eqref{mhmxrational} imply that \eqref{hwm}--\eqref{hilbert_line}  is fulfilled if and only if 
\begin{subequations}\label{theorem1eom1}
\begin{align}
\dot{\bs}_j=&-\ssum{k}{j}{2N}(1+r_jr_k)(\bs_j\wedge\bs_k)V(a_j-a_k), \label{sjeq} \\
\dot{a}_j \bs_j=& -r_j\bs_j \wedge \bigg( \ii \bmb-\ssum{k}{j}{2N} r_k\bs_k \alpha(a_j-a_k)  \bigg). \label{ajeq}
\end{align}
\end{subequations}
By \eqref{alphaV}--\eqref{absr}, \eqref{theorem1eom1} is equivalent to \eqref{resultsj1}--\eqref{resultaj1}.   To summarize: {\em If the pole ansatz in \eqref{ansatz} satisfies $\bm^2=1$ at initial time $t=0$, and if the spins $\bs_j$ and poles $a_j$ time evolve according to  \eqref{resultsj1}--\eqref{resultaj1} and are such that $a_j\in\C_+$ for $j=1,\ldots,N$, then the function $\bm$ in \eqref{ansatz} is a solution of the HWM equation \eqref{hwm}--\eqref{hilbert_line} satisfying $\bm^2=1$ for all times $t\in\R$.}  

It is not obvious but true that \eqref{ajeq} is a well-defined differential equation determining the time evolution of the poles $a_j$: as explained in Appendix~\ref{app:consistency}, \eqref{ajeq} can be consistently reduced to the following scalar equation, 
\begin{equation}  
\dot{a}_j =  r_j \frac{\bs_j\wedge \bs_j^*}{\bs_j\cdot\bs_j^*} \cdot \bigg( \ii \bmb-\ssum{k}{j}{2N} r_k\bs_k \alpha(a_j-a_k)  \bigg)
\end{equation} 
(note that this equation is obtained from \eqref{ajeq} by taking the dot product  with $\bs_j^*$). 
Another way to see this is to differentiate \eqref{ajeq} with respect to time, and to simplify the resulting equation by a straightforward but tedious computation  using \eqref{constraint22} and \eqref{theorem1eom1}; this gives 
\begin{equation} 
\ddot{a}_j  = -\ssum{k}{j}{2N} (1+r_jr_k)(\bs_j\cdot\bs_k) V'(a_j-a_k)
\end{equation} 
(computational details are given in Appendix~\ref{proofddotaj}); by \eqref{alphaV}--\eqref{absr}, this is equivalent to \eqref{eq:ddota_rational1}. 
Thus, rather than solving \eqref{resultsj1}--\eqref{resultaj1} with initial conditions satisfying \eqref{constraint}, we can interpret \eqref{resultaj1} as a further constraint on initial conditions and determine the time evolution of $\bs_j(t)$ and $a_j(t)$ by solving \eqref{resultsj1} and \eqref{eq:ddota_rational1}. As discussed in Section~\ref{sec:CS_dynamics}, the latter two equations define the dynamics of a spin CM system which is known to be integrable. 

We summarize our findings as follows. 

\begin{theorem} 
\label{thm} 
For arbitrary $\bmb\in S^2$, $N\in\Z_{\geq 1}$, $\bs_{j,0}\in\C^3$ and $a_{j,0}\in\C_+$  such that 
\begin{equation} 
\label{constraintthm} 
\bs_{j,0}^2=0,\quad \bs_{j,0}\cdot\bigg(\ii\bmb - \ssum{k}{j}{N}\frac{\bs_{k,0}}{a_{j,0}-a_{k,0}} +  \sum_{k=1}^N\frac{\bs^*_{k,0}}{a_{j,0}-a_{k,0}^*} \bigg)=0  
\end{equation} 
for $j=1,2,\ldots,N$, let $\bs_j(t)$ and $a_j(t)$ be solutions of the following system of equations,  
\begin{subequations} 
\label{dyn}
\begin{align} 
\dot\bs_j(t) 
& = -2\ssum{k}{j}{N}\frac{\bs_j(t)\wedge\bs_k(t)}{\left(a_j(t)-a_k(t)\right)^2}, \label{sjdotthm}\\
\ddot a_j(t) & = 4\ssum{k}{j}{N}\frac{\bs_j(t)\cdot\bs_k(t)}{\left(a_j(t)-a_k(t)\right)^3} \label{ajddotthm},
\end{align} 
\end{subequations} 
with initial conditions $\bs_j(0)=\bs_{j,0}$, $a_j(0)=a_{j,0}$, and 
\begin{equation} 
\dot a_j(0) = \frac{\bs_{j,0}\wedge \bs_{j,0}^*}{\bs_{j,0}\cdot\bs_{j,0}^*}\cdot\bigg( \ii\bmb   - \ssum{k}{j}{N} \bs_{k,0}\frac1{a_{j,0}-a_{k,0}}
+ \sum_{k=1}^{N}\bs^*_{k,0}\frac1{a_{j,0}-a^*_{k,0}}  \bigg)
\end{equation} 
for $j=1,\ldots,N$. Then, for all $t\in\R$ such that  (i) $a_j(t)\in\C_+$ for all $j=1,\ldots,N$,  (ii) $\bm(x,t)$ is differentiable with respect to $x$ and $t$, $\bm(x,t)$ in \eqref{ansatz} is an exact solution of the HWM equation \eqref{hwm}--\eqref{hilbert_line}  such that $\bm(x,t)^2=1$. 
\end{theorem}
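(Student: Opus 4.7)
The plan is to assemble the three building blocks already established in Section~\ref{sec:derivation1}: (A) the ansatz \eqref{ansatz} satisfies $\bm^2=1$ iff the algebraic constraints \eqref{constraint} hold; (B) given $\bm^2=1$, the ansatz solves \eqref{hwm}--\eqref{hilbert_line} iff the first-order pole-spin system \eqref{resultsj1}--\eqref{resultaj1} holds; and (C) the vector equation \eqref{resultaj1} reduces, on the constraint surface, to a single scalar ODE for $\dot a_j$ whose time-derivative coincides with the second-order CM equation \eqref{eq:ddota_rational1}. With these in hand, the theorem reduces to showing that if we feed the CM system \eqref{sjdotthm}--\eqref{ajddotthm} with initial data encoded in (A) and the scalar form of (B), then \eqref{constraint} and \eqref{resultaj1} continue to hold for all admissible $t$.

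First I would verify the initial time. The hypothesis \eqref{constraintthm} is precisely (A) at $t=0$, so $\bm(x,0)^2=1$ follows directly. Taking the dot product of \eqref{resultaj1} with $\bs_{j,0}^*$, using $\bs_{j,0}^2=0$ together with the identity $\bs_j^*\cdot(\bs_j\wedge X)=-(\bs_j\wedge\bs_j^*)\cdot X$, reproduces exactly the prescribed formula for $\dot a_j(0)$; the consistency result from Appendix~\ref{app:consistency} then guarantees that this scalar datum determines the full vector equation \eqref{resultaj1} at $t=0$.

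The central step is to show that the algebraic constraints (A) and the vector equation (B) are preserved by the CM flow \eqref{sjdotthm}--\eqref{ajddotthm}. I would introduce defect quantities
\begin{align*}
C_j^{(1)}(t)&\coloneqq\bs_j(t)^2,\\
C_j^{(2)}(t)&\coloneqq\bs_j(t)\cdot\bigg(\ii\bmb-\ssum{k}{j}{N}\frac{\bs_k(t)}{a_j(t)-a_k(t)}+\sum_{k=1}^{N}\frac{\bs_k^*(t)}{a_j(t)-a_k^*(t)}\bigg),\\
P_j(t)&\coloneqq\dot a_j(t)\bs_j(t)+\bs_j(t)\wedge\bigg(\ii\bmb-\ssum{k}{j}{N}\frac{\bs_k(t)}{a_j(t)-a_k(t)}+\sum_{k=1}^{N}\frac{\bs_k^*(t)}{a_j(t)-a_k^*(t)}\bigg),
\end{align*}
all of which vanish at $t=0$ by construction, and check that $\dot C_j^{(1)}$, $\dot C_j^{(2)}$, and $\dot P_j$ all vanish on the common zero locus $\{C^{(1)}=C^{(2)}=P=0\}$. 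The computations proceed by substituting \eqref{sjdotthm} for $\dot\bs_j$ and \eqref{ajddotthm} for $\ddot a_j$ and simplifying with the functional identity \eqref{Id} (and its derivative in $a_k$). Standard ODE uniqueness applied to the resulting autonomous system in the defects then forces $C_j^{(1)}(t)=C_j^{(2)}(t)=0$ and $P_j(t)=0$ throughout the interval of existence.

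On the time interval on which $a_j(t)\in\C_+$ and $\bm(x,t)$ is smooth, fact (A) now gives $\bm(x,t)^2=1$ and fact (B) yields the HWM equation. The main obstacle I anticipate is the preservation step for $P_j$: propagating a three-component vector condition by a flow whose only time-derivative datum is a single scalar requires the tedious algebraic simplifications already alluded to after \eqref{theorem1eom1}. These hinge on the same pole-identity \eqref{Id} that drove the computations of $\bm^2$ and $\bm_t$, so the work will be mechanical rather than conceptual, and the cleanest organization is likely to group the manipulations by the type of pole factor appearing.
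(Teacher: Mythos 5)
Your proposal is correct in outline, and its building blocks are literally the paper's own: your (A), (B), (C) are Lemma~\ref{normlemmaperiodic}, Proposition~\ref{poleansatzthm2} together with the linear-independence argument of Section~\ref{sec:derivation1}, and the reduction results of Appendices~\ref{app:consistency} and~\ref{proofddotaj}, respectively; your verification of the formula for $\dot a_j(0)$ by dotting \eqref{resultaj1} with $\bs_{j,0}^*$ is exactly how the paper obtains its scalar equation. Where you genuinely diverge is the crucial step of propagating the constraints along the flow \eqref{sjdotthm}--\eqref{ajddotthm}: you propose a direct defect computation ($C^{(1)}_j$, $C^{(2)}_j$, $P_j$) closed by a Gronwall/uniqueness argument, whereas the paper obtains conservation essentially for free from the length-preserving property \eqref{eq:length_preserving} of the HWM equation itself: on the constraint surface the first-order system holds, hence $\bm$ solves \eqref{hwm}, hence $\partial_t\bm^2=0$, and by Lemma~\ref{normlemmaperiodic} (constraints $\Leftrightarrow\bm^2=1$) the constraints persist; the identification with the second-order flow \eqref{dyn} then comes from Appendix~\ref{proofddotaj} (first-order plus constraints $\Rightarrow$ CM equations) combined with ODE uniqueness. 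Your route is more self-contained at the ODE level --- it never invokes the PDE's conservation law, and it confronts head-on a bootstrapping point that the paper's argument glosses (one needs the constraints to hold in order to assert that $\bm$ solves \eqref{hwm} in the first place) --- but it commits you to redoing, with defect bookkeeping, precisely the tedious three-body cancellations of Appendix~\ref{proofddotaj}; the paper's route recycles that computation once and never differentiates $C^{(2)}_j$ or $P_j$ at all.

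Two refinements you should make explicit. First, ``the derivatives vanish on the common zero locus'' is not by itself sufficient for the uniqueness step: you must exhibit $\dot C^{(1)}_j$, $\dot C^{(2)}_j$, $\dot P_j$ as linear combinations of the defects with coefficients continuous along the trajectory (the defect system is moreover not autonomous, since the coefficients depend on $t$ through $(a_j,\dot a_j,\bs_j)$); this does work here because all expressions are rational in these variables, and note that $\dot C^{(1)}_j=0$ holds identically since $\dot\bs_j\perp\bs_j$ in \eqref{sjdotthm}. Second, your anticipated obstacle --- propagating the three-component condition $P_j$ --- largely dissolves via Lemma~\ref{lem:sreps}(b): expanding $P_j$ in the basis $\{\bs_j,\bs_j^*,\bs_j^*\wedge\bs_j\}$ (valid once $C^{(1)}_j=0$), its components along $\bs_j^*$ and $\bs_j^*\wedge\bs_j$ are proportional to $\dot a_j\,C^{(1)}_j$ and to $C^{(2)}_j$, respectively, so with the notation \eqref{bjdefinition} the only genuinely new scalar defect is the projection $\bs_j^*\cdot P_j/(\bs_j^*\cdot\bs_j)$, i.e.\ the scalar pole-equation defect --- this is the same observation (Lemma~\ref{lem:bs}) that underlies the paper's consistency argument in Appendix~\ref{app:consistency}.
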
 

\begin{remark} 
\label{rem:cross}
We restrict the result to times such that $a_j(t)\in\C_+$  for all $j$ since, if one of the poles crosses the real line at some time $t=t_0\gtrless 0$, then our proof breaks down for $t\gtrless t_0$. 
In our numerical experiments we never saw this happen, and we believe that it cannot happen; it would be interesting to prove this.  
\end{remark} 
\begin{remark}
\label{rem:cusp} 
For simplicity, we exclude the possibility of non-differentiable solutions $\bm(x,t)$. 
However, we found examples of initial data $\bm(x,0)$ of the form \eqref{ansatz} with one and two cusps.
Moreover, it is conceivable that certain differentiable initial data $\bm(x,0)$ of the form  \eqref{ansatz} develop cusps in either $x$ or $t$ as time evolves; in fact, we observed
such cusps in numerical experiments; see Fig.~\ref{fig:three_soliton}. While we specifically exclude such cusps in the above theorem for simplicity, it is tempting to speculate that profiles $\bm(x,t)$ with cusps are in fact weak solutions of \eqref{hwm} in analogy with the peakon solutions originally discovered for the Camassa-Holm equation \cite{camassa1993}. We leave this interesting question to future work.
\end{remark} 

\subsection{Relation to rational spin Calogero-Moser system}
\label{sec:CS_dynamics}
We show that the equation determining the spin-pole dynamics in \eqref{dyn} are identical with the equations of motion of the rational spin CM system due to Gibbons and Hermsen \cite{gibbons1984generalisation} in a special case \cite{krichever1995spin}.

Consider the Hamiltonian  
\begin{equation}
\label{eq:ratCMham}
H_{\mathrm{CM}} = \frac{1}{2}\sum_{j=1}^N p_j^2+ \sum_{1\leq j<k\leq N}  \frac{2  \bS_j\cdot \bS_k}{(q_j-q_k)^2}
\end{equation}
that drives the classical time evolution of the variables  $q_j$, $p_j$, and $\bS_j\coloneqq (S_j^1,S_j^2,S_j^3)$ according to the Poisson brackets
\begin{equation}
\label{eq:poisson_brackets_rat}
\{q_j, p_k \} = \delta_{jk}  , \quad \{S_j^a, S_k^b \} =   \delta_{jk}  \epsilon_{abc}   S_j^c
\end{equation}
for $j,k=1,\ldots,N$ and $a,b=1,2,3$, where $\epsilon_{abc}$ is the completely antisymmetric symbol with $\epsilon_{123}=1$. The equations of motion derived from this Hamilton system are
\begin{subequations} 
\label{dyn2}
\begin{align} 
\dot\bS_j  & = -2\ssum{k}{j}{N}\frac{\bS_j\wedge\bS_k}{(q_j-q_k)^2}, \label{sjdotthm111}\\
\ddot q_j & = 4 \ssum{k}{j}{N}\frac{\bS_j\cdot\bS_k}{(q_j-q_k)^3} . \label{ajddotthm111}
\end{align} 
\end{subequations} 
We observe that these equations are identical with the time evolution equations in \eqref{dyn} if we identify $q_j$ and $\bS_j$ with $a_j$ and $\bs_j$, respectively.  
However, there is an important difference: the variables $q_j$ and $\bS_j$ here correspond to particle positions on the real line and real spins, respectively, whereas the poles $a_j$ and spin variables $\bs_j$ are complex and thus have no direct physical interpretation. 
Thus, the spin-pole dynamics for the HWM equation we found corresponds to a peculiar complexified version of the spin CM system in  \eqref{eq:ratCMham}--\eqref{eq:poisson_brackets_rat}. 
However, one still can use the known exact solution of this system  \cite{gibbons1984generalisation} to solve the equations in \eqref{dyn}, and we therefore obtained fully analytic multi-soliton solutions of the HWM equation \eqref{hwm}--\eqref{hilbert_line}. 

In the rest of this section we show that the Hamiltonian system \eqref{eq:ratCMham}--\eqref{eq:poisson_brackets_rat} is indeed a special case of the rational spin Calogero-Moser model in \cite{gibbons1984generalisation,krichever1995spin}. 

The rational spin Calogero-Moser model is defined by the Hamiltonian \cite{gibbons1984generalisation}
\begin{equation}
\label{eq:HKBBT}
H_{\mathrm{GH}} \coloneqq  \frac{1}{2} \sum_{j=1}^N p_j^2 +
 \sum_{1\leq j<k\leq N}\frac{(\bv_j\cdot \bw_k)   (\bv_k\cdot \bw_j) }{(q_j -q_k)^2}  ,
\end{equation}
generating the time evolution of $N$ particles with coordinates $q_j$ and  momenta $p_j$, together with two sets of internal degrees of freedom 
given by two $d$-vectors $\bv_j=(v_j^\alpha)_{\alpha=1}^N$ and $\bw_j=(w_j^\alpha)_{\alpha=1}^N$, with $\bv_j\cdot \bw_j\coloneqq \sum_{\alpha=1}^d v_j^\alpha w_j^\alpha$ and the following non-trivial Poisson brackets, 
\begin{equation}
\label{eq:KBBT_Poisson}
\{q_j, p_k \}  = \delta_{jk}  , \quad \{ v_j^\alpha, w_k^{\beta} \} = -\ii \delta_{jk} \delta_{\alpha \beta}  . 
\end{equation}
It is straightforward to check that $\bv_j\cdot \bw_j$ are integrals of motion. The reduction of the system to a submanifold determined by
\begin{equation}
\label{eq:restriction}
\bv_j\cdot \bw_j = d   c  \quad (j=1,\ldots,N)  ,
\end{equation}
for any $c$ is integrable \cite{krichever1995spin}. Define the matrices $M_j$ by their entries $(M_j)_{\alpha \beta} \coloneqq v_j^\alpha w_j^\beta$, such that $(\bv_j\cdot \bw_k)   (\bv_k\cdot \bw_j) = \mathrm{Tr}\left( M_j M_k\right)$ and the restriction \eqref{eq:restriction} becomes
\begin{equation}
\mathrm{Tr}(M_j) = d  c  .
\end{equation}
Let us now specialize to $d=2$ and $c=0$, i.e. the case of two-dimensional vectors and $2\times 2$ traceless matrices. We can decompose any such matrix as $M_j = \bS_j \cdot \boldsymbol{\sigma} $, where $\boldsymbol{\sigma}$ is the vector of Pauli matrices. The coefficients $\bS_j$ can be found using $\bS_j = \frac{1}{2} \mathrm{Tr} ( \boldsymbol{\sigma} M_j)$. We can now rewrite the system in terms of $\bS_j$: the Poisson brackets $ \{S_j^a, S_k^b \} $ following from \eqref{eq:KBBT_Poisson} equal the ones in \eqref{eq:poisson_brackets_rat} and since $\mathrm{Tr}\left( M_j M_k\right) =  2\bS_j \cdot   \bS_k$, the Hamiltonian in \eqref{eq:HKBBT} reduces to the one in \eqref{eq:ratCMham}. 

\subsection{Traveling wave solutions}
\label{sec:travel} 
We discuss how previously known traveling wave solutions of the HWM equation are recovered from our soliton solutions (Section~\ref{sec:derivation}), and we discuss the physical interpretation of one-soliton solutions (Section~\ref{sec:onesoliton}). 

\subsubsection{Derivation}
\label{sec:derivation} 
It is known that the HWM equation admits exact traveling wave solutions \cite{zhou2015,lenzmann2018}
\begin{equation}\label{travelingwave}
\bm(x,t)=\big(\sqrt{1-v^2}\,  \re B(x-vt) ,\mp \sqrt{1-v^2}\,  \im B(x-vt),\mp v\big), \qquad -1<v<1,
\end{equation}
where
\begin{equation}\label{blaschke}
B(z)=\prod_{j=1}^N \frac{z-a_{j,0}}{z-a_{j,0}^*}, \qquad a_{j,0} \in \C_+.
\end{equation}
We show how to recover these solution from the ansatz \eqref{ansatz}. 

We set
\begin{equation}
\bmb= \cos \theta  \be_1+\sin\theta  \be_3,\quad \bs_j= s_j(\be_1\mp \ii \be_2) \quad (j=1,\ldots,N)
\end{equation}
with $s_j\in \C$ and $\be_1=(1,0,0)$, $\be_2=(0,1,0)$, $\be_3=(0,0,1)$. We see that the first constraint in \eqref{constraint} is satisfied: $\bs_j^2=0$, while the second constraint in \eqref{constraint} becomes
\begin{equation}\label{travelingwaveconstraint}
\cos\theta-\sum_{k=1}^N \frac{2\ii s_k^*}{a_j-a_k^*}=0. 
\end{equation}
We have $\dot{\bs}_j=0$ from \eqref{resultsj1}, while \eqref{resultaj1} leads to 
\begin{align*}
\dot{a}_j \bs_j=& s_j (\be_1\mp \ii \be_2)\wedge \Bigg( -\ii ( \cos\theta \be_1+\sin\theta \be_3 ) + \sum_{k=1}^{N}\frac{s_k^*(\be_1\pm\ii\be_2)}{a_j-a_k^*}      \Bigg) \\
=&\mp \sin\theta \bs_j \mp s_j \cos\theta \be_3\pm \sum_{k=1}^N\frac{2\ii s_j s_k^*}{a_j-a_k^*}\be_3 = \mp \sin\theta \bs_j  ,
\end{align*}
using \eqref{travelingwaveconstraint} in the last step. Thus $\bs_j(t)=s_{j,0}(\be_1\mp\ii\be_2)$ and $a_j(t)=a_{j,0}\mp \sin\theta  t$,  and we have found the solution
\begin{align}\label{travelingwavesolution}
\bm(x,t)=&\cos \theta  \be_1+\sin\theta  \be_3+\ii \sum_{j=1}^N \frac{s_{j,0}(\be_1\mp \ii\be_2)}{x-a_{j,0}\pm \sin\theta t }-\ii \sum_{j=1}^N \frac{s_{j,0}^*(\be_1\pm \ii\be_2)}{x-a_{j,0}^*\pm \sin\theta t } \nonumber\\
= & \re\left( \cos \theta  \be_1 -  \sum_{j=1}^N \frac{2\ii s_{j,0}^*(\be_1\pm \ii\be_2)}{x-a_{j,0}^*\pm \sin\theta t } \right) + \sin\theta  \be_3
\end{align}
subject to \eqref{travelingwaveconstraint} at time $t=0$. To solve \eqref{travelingwaveconstraint} at $t=0$, we note that the Blaschke product in \eqref{blaschke} has the decomposition
\begin{equation}\label{blaschkedecomposition}
B(z)=1+\sum_{k=1}^N \frac{B_k}{z-a_{k,0}^*},\qquad B_k=(a_{k,0}-a_{k,0}^*)\pprod{j}{k}{N}  \frac{a_{j,0}-a_{k,0}^*}{a_{j,0}^*-a_{k,0}^*}.
\end{equation}
Because $B(a_{j,0})=0$, we must have
\begin{equation}
1+\sum_{k=1}^N \frac{B_k}{a_{j,0}-a_{k,0}^*}=0 \qquad (j=1,\ldots,N),
\end{equation}
and we see that  $2\ii s_{k,0}^*=-B_k \cos\theta$ provides a solution to \eqref{travelingwaveconstraint} at $t=0$. Thus, 
\begin{equation}
\cos\theta-\sum_{k=1}^N \frac{2\ii s_{j,0}^*}{x-a_{j,0}^*\pm \sin\theta t}=\cos\theta B(x \pm  \sin\theta  t). 
\end{equation}
Inserting this into \eqref{travelingwavesolution} gives
\begin{equation*}
\bm(x,t)=\be_1 \cos\theta\,  \re  B(x\pm \sin \theta t) \mp  \be_2 \cos\theta\, \im   B(x\pm \sin \theta t) + \be_3 \sin \theta,
\end{equation*}
which is \eqref{travelingwave} with $v=\mp \sin\theta$. 

\subsubsection{One-soliton solutions}
\label{sec:onesoliton} 
We write the traveling-wave solution in \eqref{travelingwave} in a coordinate-independent way by renaming $\be_1\to \bn_1$, $\mp\be_2\to \bn_2$, $\mp\be_3\to \bn_1\wedge\bn_2\coloneqq \bn$: 
\begin{subequations} 
\label{onesoliton} 
\begin{equation} 
\label{onesolitonbm} 
\bm(x,t) = (\bn_1\cdot\bmb)\bigl(\bn_1 \re B(x-vt)+\bn_2 \im B(x-vt) \bigr)+(\bn\cdot\bmb)\bn ,\quad v= \bn\cdot\bmb
\end{equation} 
with $B(z)$ in \eqref{blaschke} satisfying $|B(x-vt)|=1$. 
This makes manifest that, for fixed vacuum $\bmb\in S^2$, a traveling wave solution is determined by the direction $\bn\in S^2$ fixing its velocity, and this direction is at the same time a rotation axis: 
the solution describes a counter-clockwise rotation of $\bm$ about a circle in the plane orthogonal to $\bn$, starting out at $\bm\to \bmb$ at $x\to-\infty$ and ending again at $\bm\to \bmb$ at $x\to\infty$; 
$N\in\Z_{\geq 1}$ corresponds to the number of rotations of $\bm$ about $\bn$. 
Moreover, despite the fact that the propagation speed $v$ can be positive, zero, or negative, the traveling wave solutions are chiral in the sense that the 
rotation direction always is counter-clockwise when going from $x=-\infty$ to $x=+\infty$: $\bn_1\wedge\bn_2=+\bn$. 

In particular, the one-soliton solutions are given by \eqref{onesolitonbm} with 
\begin{equation} 
B(x-vt) =  \frac{x-\aR_{1,0}-\ii\aI_{1,0}- vt}{x-\aR_{1,0}+\ii\aI_{1,0}- vt} =  \frac{(x-\aR_{1,0}-vt)/\aI_{1,0}-\ii}{(x-\aR_{1,0}-vt)/\aI_{1,0}+\ii} \end{equation} 
\end{subequations} 
where $\aR_{1,0}\coloneqq\re a_{1,0}$ and $\aI_{1,0}\coloneqq \im a_{1,0}$; we interpret the traveling wave solutions for $N>1$ as a bound state of $N$ one-solitons which have the same constant rotation direction and speed. 

\begin{figure}[h!]
\centering
\begin{tikzpicture}
\def\a{1.6};
\def\b{20};
\def\c{0.6};
\node at (0,0) {\includegraphics[scale=0.7]{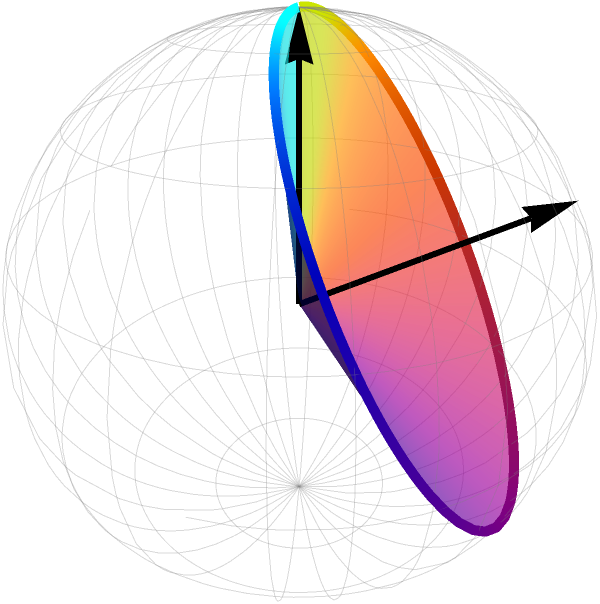}};
\node at (4.5,1.5) {$\mathbf{n}$};
\node at (0.1,4.6) {$\bmb$};
\node at (-0.26,-5.8) {\includegraphics[scale=1]{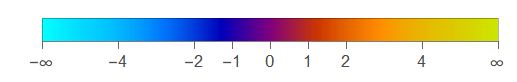}};
  \coordinate (P) at ($(0, 0) + (\b:\c cm and \a cm)$);
\node[above] at (0,-5.3) {$x$};
\draw[thick, black,->,rotate=25] (4.3,0.2) arc
  (\b:310:\c cm and \a cm);
\end{tikzpicture}

\caption{Spin configuration $\bm(x,t)$ for a one-soliton solution corresponding to the initial data \eqref{eq:initial_data_one_soliton} at fixed time $t=1/2$. 
Shown are the vacuum solution $\bmb$, the rotation direction $\bn$, and the spin $\bm(x,t)$ as a function of $x\in\R$, sweeping out a colored cone: at fixed $x$, $\bm(x,t)$ is drawn as a colored vector from the origin with its color indicating the value of $x$ according to the legend underneath. The curved arrow indicates the counter-clockwise rotation direction of the spin when going from $x=-\infty$ to $x=\infty$.}
\label{fig:single_soliton}
\end{figure}

\subsection{Solution of constraints}
\label{sec:constraints1}
We show that, for fixed vacuum $\bmb\in S^2$ and soliton number $N\in\Z_{\geq 1}$, Theorem~\ref{thm} provides a $4N$-parameter family of soliton solutions.
For that, we discuss the solution of the constraints in  \eqref{constraint}. For simplicity, we write in this section $a_j$ and $\bs_j$ short for $a_{j,0}$ and $\bs_{j,0}$, 
respectively, and we restrict our discussion to the case 
\begin{equation} 
\label{separated} 
|\aR_j/\aI_j-\aR_k/\aI_k|\gg 0\quad (1\leq j <k\leq N)  , 
\end{equation} 
where $\aR_j\coloneqq \re a_j$ and $\aI_j\coloneqq \im a_j$. We find it convenient to write \eqref{constraint} as
\begin{equation}\label{constraint11}
\bs_j^2=0, \quad \bs_j\cdot \bigg(\ii \bm_j -  \frac{\bs_j^*}{a_j-a_j^*} \bigg),\quad 
\bm_j \coloneqq \bmb+ \ii \ssum{k}{j}{N} \bigg( \frac{\bs_k}{a_j-a_k}- \frac{\bs_k^*}{a_j-a_k^*} \bigg) 
\end{equation}
for $j=1,\ldots,N$. 

As explained in Appendix~\ref{app:bs}, the general solution of the first constraint  in \eqref{constraint11} is 
\begin{equation} 
\label{bsjansatz}
\bs_j = s_j(\bn_{j,1}+\ii\bn_{j,2})
\end{equation} 
with $s_j\in\C$ and $\bn_{j,1},\bn_{j,2}\in S^2$ such that  $\bn_{j,1}\cdot \bn_{j,2}=0$, and these orthonormal vectors $\bn_{j,1}$ and $\bn_{j,2}$ are not unique but can be rotated in the plane spanned by them at the cost of changing the phase of $s_j$;  see Lemma~\ref{lem:sreps} (a) for a precise formulation of this fact. Thus, to solve the constraints \eqref{constraint11}, we pick $N$ poles $a_j\in\C_+$ and $N$ unit vectors $\bn_j\in S^2$ such that $\bn_j\cdot\bs_j=0$, i.e., $\bn_j=\bn_{j,1}\wedge\bn_{j,2}$. As argued below, for fixed $\bmb\in S^2$,  this choice can be made without any restriction, and then the constraints  \eqref{constraint11} determine the $\bs_j$ uniquely. 
Thus, solutions $(a_j,\bs_j)_{j=1}^N$ of \eqref{constraint} are parametrized by $4N$  real parameters: two for the real- and imaginary parts of each pole $a_j$; two for the polar angles determining the direction $\bn_j$; $j=1,\ldots,N$. 
Moreover, these parameters have a simple physical interpretation when the solitons are far apart: $\aR_j=\re a_j$ determines the initial position of the soliton $j$; $\aI_j=\im a_j$ determines its spatial extension; $\bn_j$ is the direction about which the spin rotates and which, at the same time, determines the soliton velocity through \eqref{onesolitonbm}.

To substantiate the claims in the previous paragraph, we insert \eqref{bsjansatz} in \eqref{constraint11}:
\begin{equation*} 
s_j(\bn_{j,1}+\ii\bn_{j,2})\cdot\bigg(\ii\bm_j - \frac{s_j^*(\bn_{j,1}-\ii\bn_{j,2})}{2\ii\aI_j} \biggr) =\ii s_j\bigg((\bn_{j,1}+\ii\bn_{j,2})\cdot\bm_j+ \frac{s_j^*}{\aI_j} \bigg) =0, 
\end{equation*} 
which has the solution 
\begin{equation} 
\label{bsjsolution}
s_j = \aI_j (\bn_{j,1}-\ii\bn_{j,2})\cdot\bm^*_j\quad (j=1,\ldots,N). 
\end{equation} 
Since $\bm_j$ depends on $\bs_{k\neq j}$, \eqref{bsjsolution} together with the definition of $\bm_j$ in \eqref{constraint11} provide a system of $N$ non-linear equations for $N$ unknowns $s_j$. 
We found that this system of equations can be solved efficiently by an iterative procedure if the conditions in \eqref{separated} hold true, which can be understood as follows:  
at initial time $t=0$, one has $N$ well-separated one-solitons, and  $\bm_j^{(0)}=\bmb$ independent of $j$ is a reasonable approximation to $\bm_j$. 
Inserting this in \eqref{bsjsolution}, one obtains an approximation to $\bs_j$: $\bs_j^{(1)}= -\aI_j (\bn_{j,1}-\ii\bn_{j,2})\cdot\bmb$, which can be inserted in the definition of $\bm_j$ to get a better approximation $\bm_j^{(1)}$ to $\bm_j$, etc. 
This suggests the following iteration scheme to solve this system of equations: 
\begin{equation}
\label{eq:iteration_idea}
s_j^{(n+1)} = \aI_j (\bn_{j,1}-\ii\bn_{j,2})\cdot\biggl( \bmb- \ii \ssum{k}{j}{N} \bigg( \frac{(s^{(n)}_k)^* (\bn_{j,1}-\ii\bn_{j,2})}{a_j^*-a_k^*}- \frac{s^{(n)}_k(\bn_{j,1}+\ii\bn_{j,2})}{a_j^*-a_k} \bigg)  \biggr) 
\end{equation} 
 for $n=0,1,2,\ldots$, starting out with $s_j^{(0)}\coloneqq 0$.
As explained in Section~\ref{sec:constraints}, we have implemented this iterative scheme, and we found that it converges well to a unique solution $\bs_j$ for a majority of choices for $a_j$ and $\bn_j$ even for small soliton separations. 
Moreover, we found that  the $N$-soliton solution is well-approximated by a superposition of one-soliton solutions: 
\begin{subequations} 
\label{largetimes}
\begin{equation} 
\bm(x,t)\approx  \sum_{j=1}^N (\bn_{1,j}\cdot\bmb)\bigl(\bn_{1,j}\re B_j(x-v_j t) + \bn_{2,j}\im B_j(x-v_j t)\bigr) +(\bn_j\cdot\bmb)\bn_j , 
\end{equation} 
\begin{equation} 
\bn_j=\bn_{j,1}\wedge\bn_{j,2}, \quad B_j(x-v_j t) = \frac{(x-\aR_j-v_jt)/\aI_j-\ii}{(x-\aR_j-v_jt)/\aI_j+\ii},\quad  v_j= \bn_j\cdot\bmb,
\end{equation} 
provided 
\begin{equation} 
|(\aR_j-v_jt)/\aI_j - (\aR_k-v_k t)/\aI_k|\gg 1\quad (1\leq j<k\leq N)
\end{equation} 
\end{subequations} 
for all times $-t_0\leq t\leq t_0$ and some $t_0>0$, and this makes manifest the physical interpretation of the parameters $a_j\in\C$ and $\bn_j$ given above (to obtain this, we make use of \eqref{U1} to rotate $\bn_{j,1},\bn_{j,2}$ so that $s_j>0$). 
Moreover, at other time intervals where the solitons are well-separated, the solution is again well approximated by a formula like \eqref{largetimes} but with possible phase shifts $\aR_j\to \taR_j$ caused by soliton-soliton interactions at intermediate times. 
In particular, any multi-soliton solution is well approximated by \eqref{largetimes} in the far past ($t\to -\infty$) and the far future ($t\to\infty$), with possible phase shifts in $\aR_j\to \aR_{j,\mp}$.
As an illustration of this, we plot a two- and a three-soliton solution in the far past and future in Fig. \ref{fig:soliton_asymp},  
clearly showing that in those regimes the multi-soliton solution is well-approximated by a superposition of one-soliton solutions.

\begin{figure}[h!]
\centering
\begin{tikzpicture}
\def\a{7};
\def\b{3.7};
\def\c{4.25};
\node at (0,0) {\includegraphics[scale=1]{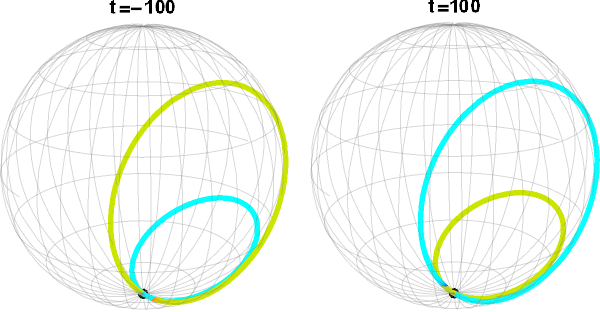}};
\draw[gray] (-\a,\b) -- (\a,\b);
\draw[gray] (-\a,-\b) -- (\a,-\b);
\foreach \x in {-1,0,1}
{
\draw[gray] (\x*\a,\b) -- (\x*\a,-\b);
}
\fill[white] (-4,2.9) rectangle (-2,3.6);
\fill[white] (2,2.9) rectangle (6,3.6);
\node at (-3.3,3.2) {$t=-100$};
\node at (3.3,3.2) {$t=100$};
\end{tikzpicture} \\
\vspace{-1pt}
\begin{tikzpicture}
\def\a{7};
\def\b{3.7};
\def\c{4.25};
\node at (0,0) {\includegraphics[scale=1]{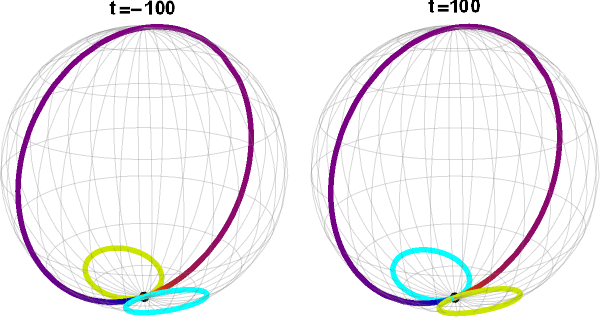}};
\draw[gray] (-\a,\b) -- (\a,\b);
\draw[gray] (-\a,-\b) -- (\a,-\b);
\foreach \x in {-1,0,1}
{
\draw[gray] (\x*\a,\b) -- (\x*\a,-\b);
}
\fill[white] (-4,2.9) rectangle (-2,3.5);
\fill[white] (2,2.9) rectangle (6,3.5);
\node at (-3.3,3.2) {$t=-100$};
\node at (3.3,3.2) {$t=100$};

\node at (-0.26,-\b-1.3) {\includegraphics[scale=1]{legend.png}};
\node[above] at (0,-\b-0.8) {$x$};
\end{tikzpicture}

\caption{Spin configurations $\bm(x,t)$ for the two- and three-soliton solutions with initial data \eqref{eq:initial_data_two_soliton} and \eqref{eq:initial_data_three_soliton}, respectively, at fixed times in the far past and future. 
The curves on the unit sphere show the spatial dependence of the spin $\bm(x,t)$, with the color indicating the value of $x\in\R$ according to the legend underneath; the black dot indicates the vacuum direction $\bmb$. 
We clearly see the individual solitons: each such soliton corresponds to a circle starting and ending at the vacuum direction $\bmb$. The time evolution of these two- and three-soliton solutions is given in Fig.~\ref{fig:two_soliton} and \ref{fig:three_soliton}, respectively.}
\label{fig:soliton_asymp}
\end{figure}

\section{Periodic multi-soliton solutons}
\label{sec3}

The periodic HWM equation is defined by the same equation \eqref{hwm} as in the real-line case, but the $S^2$-valued functions $\bm(x,t)$ are periodic: $\bm(x,t)=\bm(x+L,t)$, with $L>0$ a fixed parameter, 
and instead of the Hilbert transform in \eqref{hilbert_line} one uses its periodic generalization \cite{king2009}: 
\begin{equation} 
\label{hilbert_circle}
(Hf)(x)\coloneqq\frac{1}{L}\pvint_{-L/2}^{L/2}\cot\left(\frac{\pi}{L}(x'-x)\right)f(x')\, \mathrm{d}x'  .
\end{equation} 
As proved in Appendix~\ref{app:proofs}, all results discussed in Section~\ref{sec2} can be straightforwardly generalized to this periodic case by replacing $\alpha(x)$ and $V(x)$ in \eqref{alphaV} with 
\begin{equation} 
\label{alphaV1} 
\alpha(x)  \coloneqq \kappa\cot\kappa x  ,\quad V(x) \coloneqq \frac{\kappa^2}{\sin^{2}\kappa x}  , \quad \kappa  \coloneqq\frac{\pi}{L}   ,
\end{equation} 
where $V(x)$ is the interaction potential of the trigonometric CM system and $\alpha(x)$ the associated special function such that $V(x)=-\alpha'(x)=\alpha(x)^2-\kappa^2$  \cite{calogero1975}. 
In particular, the pole ansatz that gives multi-soliton equations for the periodic HWM equation is 
\begin{equation}\label{ansatz1}
\bm(x,t)=\bmb+\ii\sum_{j=1}^N \bs_j(t) \kappa\cot\kappa\left(x-a_j(t)\right)-\ii\sum_{j=1}^N \bs_j(t)^* \kappa\cot\kappa\left(x-a_j(t)^*\right)   . 
\end{equation}
 
\begin{theorem} 
\label{thm1} 
For arbitrary $N\in\Z_{\geq 1}$, $\bmb\in\R^3$, $\bs_{j,0}\in\C^3$ and $a_{j,0}\in\C_+$  such that 
\begin{subequations} 
\label{constraintthm1} 
\begin{equation} \label{constraintthm1a} 
\bs_{j,0}^2=0,
\quad \bs_{j,0}\cdot\bigg(\ii\bmb - \ssum{k}{j}{N}\bs_{k,0}\kappa\cot\kappa\left(a_{j,0}-a_{j,0}\right)+  \sum_{k=1}^N\bs^*_{k,0}\kappa\cot\kappa\left(a_{j,0}-a_{j,0}^*\right)\bigg)=0 
\end{equation} 
for $j=1,2,\ldots,N$,  and 
\begin{equation}\label{newconstraint} 
\bmb^2- 4\kappa^2\bigg( \sum_{j=1}^{N}   \im  \bs_{j,0} \bigg)^2  = 1, 
\end{equation}
\end{subequations} 
let $\bs_j(t)$ and $a_j(t)$ be solutions of the following system of equations,  
\begin{subequations} 
\label{dyn1}
\begin{align} 
\dot\bs_j(t) 
& = -2 \ssum{k}{j}{N}\bs_j(t)\wedge\bs_k(t)\frac{\kappa^2}{\sin^{2}\kappa \left( a_j(t)-a_k(t) \right)}\label{sjdotthm1}\\
\ddot a_j(t) & = 4\ssum{k}{j}{N}\bs_j(t)\cdot\bs_k(t)\frac{\kappa^3\cos\kappa\left(a_j(t)-a_k(t)\right)}{\sin^3\kappa\left(a_j(t)-a_k(t)\right)} \label{ajddotthm1},
\end{align} 
\end{subequations} 
with initial conditions $\bs_j(0)=\bs_{j,0}$, $a_j(0)=a_{j,0}$, and 
\begin{align} 
\dot a_j(0) = & \frac{\bs_{j,0}\wedge \bs_{j,0}^*}{\bs_{j,0}\cdot\bs_{j,0}^*}\cdot\bigg( \ii\bmb   - \ssum{k}{j}{N} \bs_{k,0}\kappa\cot\kappa\left(a_{j,0}-a_{k,0}\right) 
+ \sum_{k=1}^N \bs^*_{k,0}\kappa\cot\kappa\left(a_{j,0}-a^*_{k,0}\right)\bigg)
\end{align} 
for $j=1,2,\ldots,N$. Then, for all times $t\in \R$ such that (i) $a_j(t)\in\C_+$ for all $j=1,\ldots,N$, (ii) $\bm(x,t)$ is differentiable with respect to $x$ and $t$, $\bm(x,t)$ in \eqref{ansatz1} is an exact solution of the periodic HWM equation \eqref{hwm} and \eqref{hilbert_circle} such that $\bm(x,t)^2=1$. 
\end{theorem}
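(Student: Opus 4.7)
The plan is to mirror the derivation of Theorem \ref{thm} from Section \ref{sec:derivation1} step by step, replacing the rational special functions \eqref{alphaV} by their trigonometric counterparts \eqref{alphaV1} and carefully tracking the new $\kappa$-dependent constant terms that appear in the relevant identities. Writing the ansatz \eqref{ansatz1} in the compact form \eqref{ansatzproof1} with the convention \eqref{absr}, the key trigonometric identities are $\alpha'(x)=-V(x)$, $\alpha(x)^2=V(x)-\kappa^2$ (contrast the rational case where $\alpha^2=V$), and the addition formula
\begin{equation*}
\alpha(x-a_j)\alpha(x-a_k)=\alpha(a_j-a_k)\left[\alpha(x-a_j)-\alpha(x-a_k)\right]-\kappa^2,
\end{equation*}
which is the analogue of \eqref{Id} with an extra constant $-\kappa^2$.

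For the constraints, I would compute $\bm^2$ exactly as in Section \ref{sec:derivation1}. The diagonal and off-diagonal $\kappa^2$-contributions (from $\alpha^2=V-\kappa^2$ and from the addition formula above) combine into a single constant
\begin{equation*}
\kappa^2\sum_{j,k=1}^{2N}r_jr_k\,\bs_j\cdot\bs_k=\kappa^2\Bigl(\sum_{j=1}^{2N}r_j\bs_j\Bigr)^{\!2}=-4\kappa^2\Bigl(\sum_{j=1}^N\im\bs_j\Bigr)^{\!2},
\end{equation*}
where I used $\sum_{j=1}^{2N}r_j\bs_j=2\ii\sum_{j=1}^N\im\bs_j$ from \eqref{absr}. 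Equating the coefficients of the linearly independent functions $1$, $\alpha(x-a_j)$, $V(x-a_j)$ in $\bm^2=1$ then yields the two constraints \eqref{constraintthm1a} (from the $V$- and $\alpha$-coefficients) and the new scalar constraint \eqref{newconstraint} (from the constant piece), which is the one qualitatively new feature of the periodic case.

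For the dynamics, $\bm_t$ retains the form \eqref{mtrational} with the new $\alpha,V$. The periodic Hilbert transform acts with an extra constant,
\begin{equation*}
H\alpha(x-a_j)=-r_j\kappa-\ii r_j\,\alpha(x-a_j),
\end{equation*}
as read off from the Fourier expansion of $\cot\kappa(x-a)$ for $a\in\C_\pm$; however, this constant vanishes upon differentiating in $x$, so $H\bm_x=-\sum_j\bs_j V(x-a_j)$ continues to hold verbatim. Likewise, the $-\kappa^2$ in the addition formula above drops out upon differentiation with respect to $a_k$, so the identity underlying the computation of $\bm\wedge H\bm_x$ in \eqref{mhmxrational} carries over unchanged. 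Matching coefficients of $\alpha(x-a_j)$ and $V(x-a_j)$ then yields exactly the equations \eqref{theorem1eom1}, which upon insertion of \eqref{absr} become \eqref{sjdotthm1}–\eqref{ajddotthm1}. The consistency argument of Appendix~\ref{app:consistency} is purely algebraic and goes through unchanged, giving the stated scalar equation for $\dot a_j(0)$ and the second-order form \eqref{ajddotthm1}; preservation of \eqref{newconstraint} under time evolution is automatic from the length-preserving property \eqref{eq:length_preserving} combined with the linear independence of $1,\alpha(x-a_j),V(x-a_j)$.

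The main obstacle is the careful bookkeeping of where the trigonometric corrections $-\kappa^2$ and $-r_j\kappa$ enter and where they drop out: they must either cancel, be absorbed into a differentiation, or contribute to the \emph{single} new scalar constraint \eqref{newconstraint}. Verifying this pattern of cancellation is the only substantive new work required relative to the proof of Theorem \ref{thm}, and once it is checked the remainder of the argument is essentially a transcription.
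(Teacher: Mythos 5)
Your proposal is correct and takes essentially the same approach as the paper, whose Appendix~\ref{app:proofs} proves the rational and trigonometric cases uniformly in terms of $\alpha,V$ and tracks exactly the $\kappa^2$ constants in the identities \eqref{Id0}--\eqref{Id1} to produce the single new constraint \eqref{newconstraint} (via $\rho^2=\bmb^2+\kappa^2\bigl(\sum_{j}\bs_j-\sum_{j}\bt_j\bigr)^2$ specialized to $\bt_j=\bs_j^*$, $\rho=1$, matching your computation $\kappa^2\bigl(\sum_{j=1}^{2N}r_j\bs_j\bigr)^2=-4\kappa^2\bigl(\sum_{j=1}^N\im\bs_j\bigr)^2$). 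Your explicit form $H\alpha(x-a_j)=-r_j\kappa-\ii r_j\,\alpha(x-a_j)$, with the constant removed by $\partial_x$, is in fact slightly more careful than the paper's bare remark that the eigenvalue relation carries over to the trigonometric case, and the remaining steps (consistency via Lemma~\ref{lem:bs}, the derivation of \eqref{ajddotthm1}, and propagation of the constraints via \eqref{eq:length_preserving}) coincide with the paper's argument.
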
 

(The proof is given in Appendix~\ref{app:proofs}.) 

Clearly, this result is a straightforward generalization of the one in Theorem~\ref{thm} for the rational case; the only new feature is the appearance of an additional constraint on initial conditions in \eqref{newconstraint}. 

Similarly to the rational case, the spin-pole dynamics in \eqref{dyn1} is identical to the equations of motion of the trigonometric spin CM model; see Appendix~\ref{app:EOM} for details. 

\section{Generating solutions}
\label{sec:constraints}

We have thus far shown that our soliton ansatz  \eqref{ansatz} provides a solution to the real-line HWM equation if the initial data satisfy the constraints \eqref{constraint}, and we presented a similar result in the periodic case with the  corresponding constraints in \eqref{constraintthm1a}--\eqref{newconstraint}. 
In this section we explore several ways to solve these constraints,  and we provide explicit examples of multi-soliton solutions.

\subsection{Iteration algorithm}
\label{sec:iteration} 
As already discussed in Section \ref{sec:constraints1} real-line case, one can find solutions to the constraints  \eqref{constraint} satisfied by soliton initial conditions using iteration. We discuss the algorithm for the corresponding constraints \eqref{constraintthm1a}--\eqref{newconstraint}  in the periodic case. This requires the incorporation of one extra constraint compared to the real-line case; the restriction to the real-line case is straightforward, as will be explained.

If we ignore soliton interactions,  the constraints  \eqref{constraintthm1a}--\eqref{newconstraint}  simplify to 
\begin{equation}
\label{eq:algorithm_basic}
\mathbf{s}_j^2 = 0  , \quad \mathbf{s}_j\cdot \bigg(\bmb-\ii \mathbf{s}_j^*  \alpha(a_j-a_j^*) \bigg)=0  , \quad \bmb^2-4\kappa^2 \sum_{j=1}^{N}{\sum_{k=1}^N}  \im  \bs_j \cdot \im  \bs_k = 1, 
\end{equation}
with $\alpha(x)$ in \eqref{alphaV1}. 
We fix some poles $a_j \in \C_+$ and a unit vector $\mathbf{n}_0 \in S^2$, write $\bmb = m  \mathbf{n}_0$ for some $m>0$, and employ again the decomposition suggested by Lemma~\ref{lem:sreps}: 
\begin{equation}
\label{eq:decomposition}
\mathbf{s}_j = s_j ( \bn_{j,1} +  \ii \bn_{j,2} )  ,
\end{equation}
with two unit vectors $\bn_{j,1},\bn_{j,2} \in S^2$. By requiring $s_j>0$ we fix the $\mathrm{U}(1)$ freedom discussed in Section \ref{sec:constraints1} and Lemma \ref{lem:sreps}, and we get a unique representation of $\bs_j$:  we find from \eqref{eq:algorithm_basic} 
\begin{equation}
\label{eq:nRI_constraint}
\bn_{j,1}\cdot \bn_{j,2} = 0  , \qquad \bn_{j,2}\cdot \bmb = 0  , 
\end{equation}
and choosing such unit vectors $\bn_{j,1}$ and $\bn_{j,2}$, we find from \eqref{eq:algorithm_basic} that $s_j   = m s_j'$ with 

\begin{equation}
\label{eq:iteration_base_step}
s_j' = \frac{\bn_{j,1}\cdot \mathbf{n}_0}{2\ii  \alpha(a_j - a_j^*)},
\end{equation}
with the constant $m>0$ determined by the last constraint in \eqref{eq:algorithm_basic}: 

\begin{equation}
\label{eq:iteration_base_step_m}
m^2 \left( 1- 4 \kappa^2 \sum_{j=1}^N \sum_{k=1}^N s_j' s_k'\bn_{j,2} \cdot \bn_{k,2} \right) =1. 
\end{equation}

This solves the simplified constraints  \eqref{eq:algorithm_basic}. 

The full constraints are as in \eqref{eq:algorithm_basic} but with $\bmb$ in the second equation replaced by 
\begin{equation} 
\bm_j = \bmb + \ii  \ssum{k}{j}{N}\bs_{k}\alpha\left(a_{j}-a_{j}\right) -\ii  \sum_{k=1}^N\bs^*_{k}\alpha\left(a_{j}-a_{j}^*\right). 
\end{equation} 
To solve this, we take the unit vectors $\mathbf{n}_0$, $\bn_{j,1}$ and $\bn_{j,2}$ as above, and, similarly as above, the parameters $s_j$ and $m$ are determined by 
\begin{equation}
\label{eq:iteration}
s_j   =\frac{(\bn_{j,1} - \ii \bn_{j,2})\cdot \mathbf{m}_j^*}{2\ii  \alpha(a_j - a_j^*)}  , \quad  m^2 = 1 +4\kappa^2 \sum_{j=1}^{N}{\sum_{k=1}^N}  \im  \bs_j \cdot \im  \bs_k.
\end{equation}
We solve the latter non-linear system of equations by iteration, taking the approximate solution in  \eqref{eq:iteration_base_step}--\eqref{eq:iteration_base_step_m} as base step: $s_j^{(0)}=m s_j'$ and $m^{(0)}=m$. 
We then compute a sequence $\{s_j^{(n+1)}\}$ and $m^{(n+1)}$ for $n=0,1,2,\ldots$: given a set of $\{s_j^{(n)}\}$ and $m^{(n)}$,  we define the associated $\bm_j^{(n)}$ and use \eqref{eq:iteration} to compute a new set $\{s_j^{(n+1)}\}$ as well as a new $m^{(n+1)}$. Since the right-hand side of the first equation in \eqref{eq:iteration} can be non-real during the iteration, we allow the $s_j$ to become complex. As shown in Lemma \ref{lem:sreps}, this extension effectively allows for the rotation of the vectors $\bn_{j,1},\bn_{j,2}$ in the plane spanned by them while keeping them orthogonal; however, the unit vectors $\bn_j=\bn_{j,1}\wedge\bn_{j,2}$ remain unchanged. 

The discussion above can be summarized in the following algorithm: 
\begin{enumerate}
\item Fix a set of poles $a_j$ and a unit vector $\bn_0$. 
\item For $j=1,\ldots,N$,  choose a unit vector $\bn_{j,2}$ orthogonal to $\bn_0$  and then a unit vector $\bn_{j,1}$ orthogonal to $\bn_{j,2}$. 
\item Compute the set $s_j^{(0)}=ms_j'$, $j=1,\ldots,N$,  and $m^{(0)}=m$ using \eqref{eq:iteration_base_step} and \eqref{eq:iteration_base_step_m}. 
\item Find new sets $\{s_j^{(n+1)}\}$ and $m^{(n+1)}$ iteratively, evaluating the right-hand side of \eqref{eq:iteration} using the data acquired at step $n$. 
\end{enumerate}
We repeat step $4$ until the result stabilizes to some given accuracy. 

In order to apply the algorithm to the rational case, we note that the constraint \eqref{eq:iteration_base_step_m} simply becomes $m^2 =1 $ in that case. 
This means $m$ does not change during the iteration and we can keep it fixed to $m=1$.

\subsubsection{Results}
We implemented the algorithm in Mathematica\footnote{Notebooks attached to the arXiv submission of this paper contain this algorithm and the data discussed here. In particular, it contains the data used to generate Fig. \ref{fig:iteration_data}.} and used random input data to find sets $\{a_j,\bs_j\}$ that give soliton initial conditions to the HWM equation. 
As long as the chosen poles $a_j$ were not too close together to cause numerical instabilities, the algorithm converges and yields soliton initial conditions that satisfy the constraints with our requested accuracy of 10 digits. 
More specifically, we found that, for $N=2,3,4,5$, for 10000 randomly chosen initial directions $\bn_{j,1}$ and $\bn_{j,2}$ and poles $a_j$, $\sim 90\%$ of the cases that satisfied $|\aR_j/\aI_j-\aR_k/\aI_k| > 1$ for all $j\neq k$ had converged to $1$ and $>$10 digits after 40 and 150 iterations, respectively.   
Examples of soliton initial conditions found in this way are given in Fig. \ref{fig:iteration_data}. 
Admissible initial conditions can be time-evolved using \eqref{dyn} to obtain a full solution to the HWM equation. Before discussing this, let us first explain how one in special cases can even find exact solutions of the constraints \eqref{constraintthm} and \eqref{constraintthm1a}.

\begin{figure}[h!]
\begin{center}
\begin{tikzpicture}
\def\a{4.};
\def\b{4.};
\def\c{2};

\node at (-0.56*\a,\b+3) {\includegraphics[scale=1]{legend.png}};
\node[above] at (-0.56*\a+0.26,\b+3.5) {$x$};
\node at (-2*\a,\b) {\includegraphics[width=\textwidth/5]{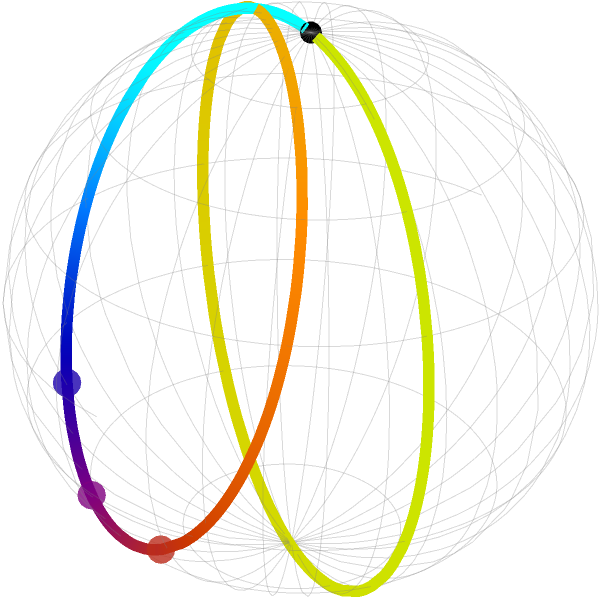}};
\node at (-\a,\b) {\includegraphics[width=\textwidth/5]{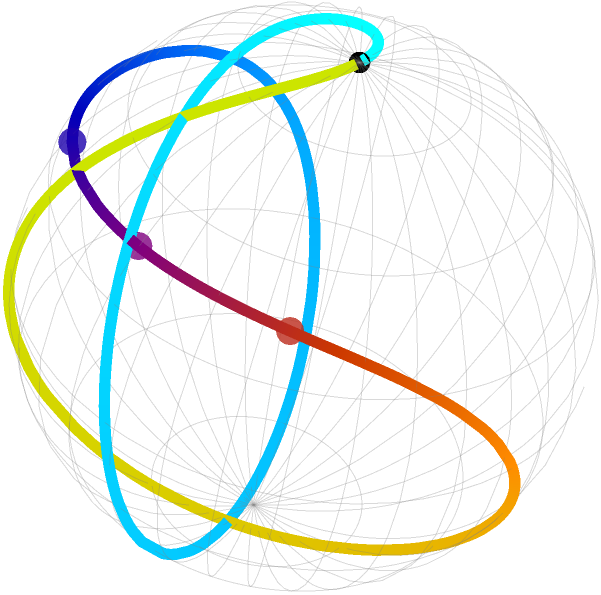}};
\node at (0,\b) {\includegraphics[width=\textwidth/5]{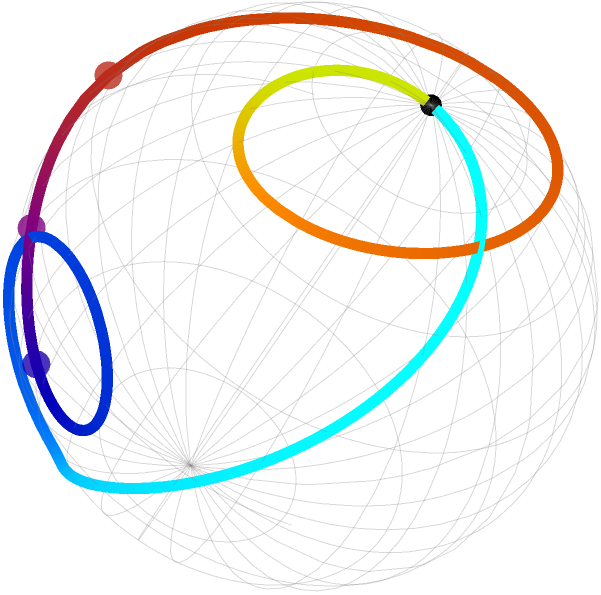}};
\node at (\a,\b) {\includegraphics[width=\textwidth/5]{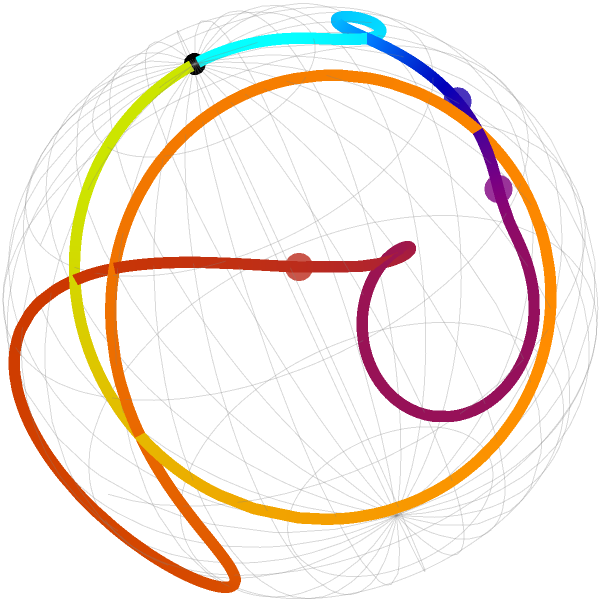}};

\node at (-2*\a,0) {\includegraphics[width=\textwidth/5]{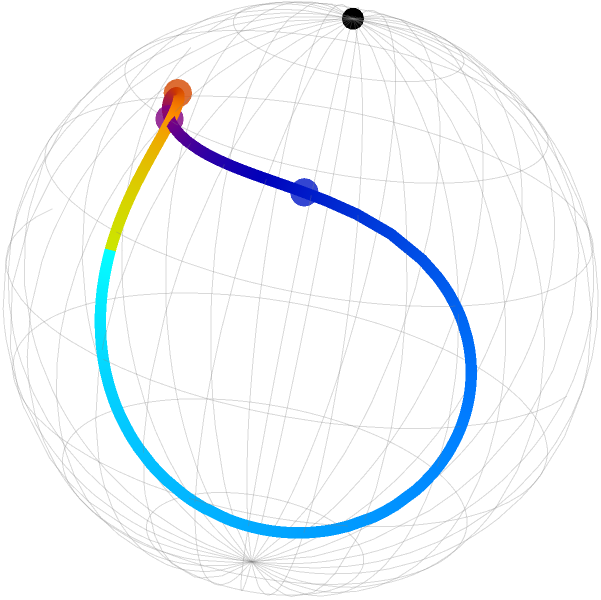}};
\node at (-\a,0) {\includegraphics[width=\textwidth/5]{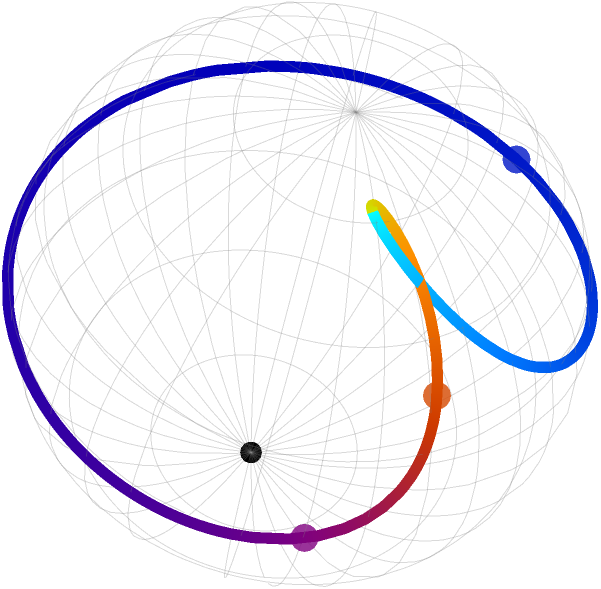}};
\node at (0,0) {\includegraphics[width=\textwidth/5]{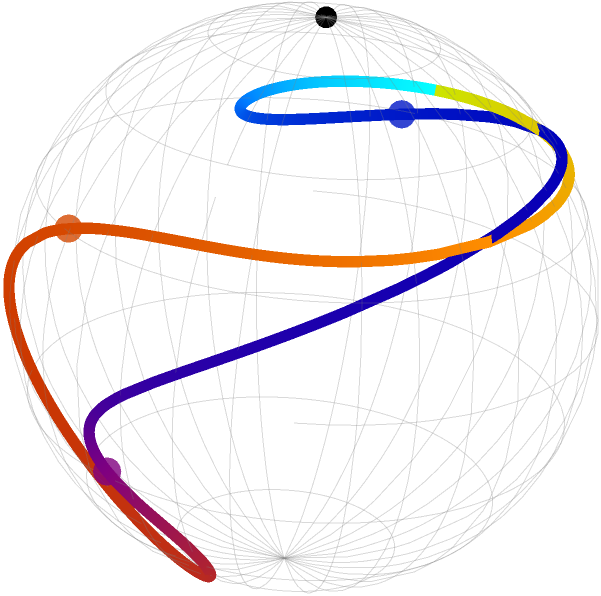}};
\node at (\a,0) {\includegraphics[width=\textwidth/5]{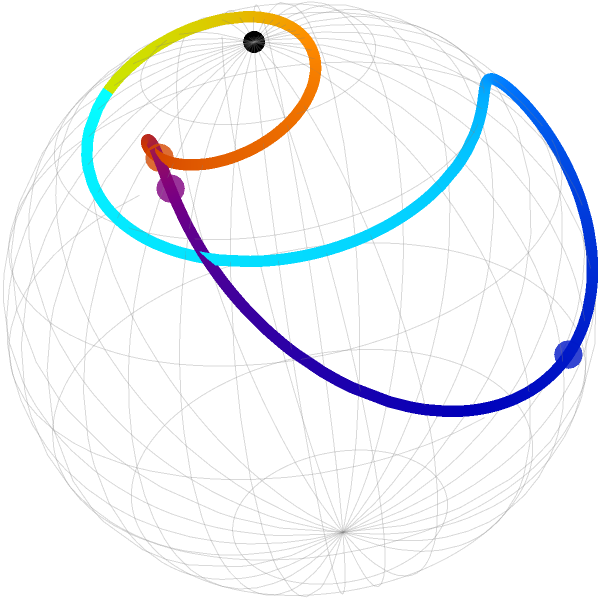}};

\node at (-0.56*\a,-3.3) {\includegraphics[scale=1]{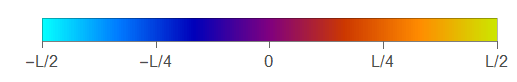}};
\node at (-0.56*\a+0.26,-2.5) {$x$};

\foreach \x in {-1,0,1}
{
\draw[gray] (-2*\a-\c,-\x*\b+0.5*\b) -- (\a+\c,-\x*\b+0.5*\b);
};

\foreach \x in {0,...,4}
{
\draw[gray] (-2*\a+\x*\a-\c,1.5*\b) -- (-2*\a+\x*\a-\c,-0.5*\b);
};
\end{tikzpicture}
\end{center}
\caption{Examples of multi-soliton initial conditions $\bm(x,t=0)$ for the HWM equation. 
The upper and lower four plots correspond to the real-line and periodic cases, respectively, and the soliton numbers are $N=2,3,4,5$ from left to right.
The position $x$ is given by colors corresponding to the legends above and below in the real-line and periodic cases, respectively; $L=5$ in the latter case. 
The vacuum $\bmb$ is indicated by a black dot, and the colored dots correspond to $\bm(x,t=0)$ at $x=-1$ (blue), $x=0$ (purple) and $x=1$ (red), respectively. 
For fixed $N$, we generated random poles $a_j$ and random directions $\bn_{j,1}$ and $\bn_{j,2}$, and we solved the non-linear constraints determining soliton initial conditions using an iterative method, as explained  in Section~\ref{sec:iteration}; the given examples illustrate the effect of inter-soliton interactions: as explained in Section~\ref{sec:constraints1}, in the real-line case and for times $t\to\pm \infty$, all spin configurations evolve towards $N$ distinct circles corresponding to non-interacting solitons.  
}
\label{fig:iteration_data}
\end{figure}

\subsection{Exact solutions}
\label{sec:exact} 
We describe a useful 
 heuristic method to generate analytic solutions to the constraints and thus obtain soliton initial conditions \eqref{ansatz} for \eqref{hwm}.

 A standard parameterization of $S^2$ is
\begin{equation}
\left( \sin\theta \cos \phi,   \sin\theta \sin\phi,   \cos\theta\right) 
\end{equation}
where $\theta\in[0,\pi]$ and $\phi\in[0,2\pi)$. Setting $\theta=\arctan y$ and $\phi=\arctan x$, we obtain
\begin{equation}\label{S2paramrational}
\left( \frac{2y(1-x^2)}{(1+y^2)(1+x^2)},    \frac{4xy}{(1+y^2)(1+x^2)},  \frac{1-y^2}{1+y^2} \right).
\end{equation}
By specifying a relationship between $y$ and $x$, we obtain a rational function from $\mathbb{R}$ to $S^2$. Doing this, we obtain three examples of soliton initial data.

\begin{enumerate}
\item $y=1/\sqrt{3}$. The corresponding initial data is
\begin{equation}
    \bm(x,0)=\bigg(\frac{2x}{1+x^2},0,\frac12\bigg).
    \end{equation}
    In terms of the expansion \eqref{ansatz} at $t=0$, the parameters are
    \begin{equation}\label{eq:initial_data_one_soliton}
       \bmb=(-\sqrt{3}/2,0,1/2),\quad a_{1,0}=\ii, \quad \bs_{1,0}=(\sqrt{3}/2,-\sqrt{3}\ii/2,0).
    \end{equation} 
    \item $y=x/2$. The corresponding initial data is
    \begin{equation}
    \bm(x,0)=\bigg(\frac{\sqrt{3}}{2}\frac{1-x^2}{1+x^2},\frac{\sqrt{3} x}{1+x^2},\frac12\bigg).
    \end{equation}
    In terms of the expansion \eqref{ansatz} at $t=0$, the parameters are
 \begin{equation}
 \begin{split}
\label{eq:initial_data_two_soliton}
\bmb =& (0,0,-1),\quad   (a_{1,0}, a_{2,0}) = (\ii, 2\ii), \\
 \mathbf{s}_{1,0} =& (4\ii/3,-4/3,0),\quad  \mathbf{s}_{2,0} = \left(-10\ii/3,8/3,2 \right). 
 \end{split}
 \end{equation}

    \item $y=x^2$. The corresponding initial data is
    \begin{equation}
    \bm(x,0)=\bigg(\frac{2x^2(1-x^2)}{(1+x^2)(1+x^4)},\frac{4x^3}{(1+x^2)(1+x^4)}    ,\frac{1-x^4}{1+x^4}\bigg).
    \end{equation}
    In terms of the expansion \eqref{ansatz} at $t=0$, the parameters are
    \begin{equation}
\label{eq:initial_data_three_soliton}
\begin{aligned}
\bmb &= (0,0,-1), \quad (a_{1,0}, a_{2,0}, a_{3,0}) = (\ii,e^{\ii\pi/4}, e^{3\ii\pi/4}),  \\
\mathbf{s}_{1,0} &= (-1,\ii,0), \quad \mathbf{s}_2 = \tfrac{1}{2}\left(e^{3\ii\pi/4} ,-1-\ii, e^{3\ii\pi/4} \right), \\
\mathbf{s}_{3,0} &= \tfrac{1}{2} \left(e^{5\ii\pi/4}, 1-\ii, e^{5\ii\pi/4} \right).
\end{aligned}
\end{equation}
\end{enumerate}

A similar method can be used to construct trigonometric initial data. Consider a meromorphic, $L$-periodic function ${\mathbf f}:\mathbb{C}\to \mathbb{C}^3$ satisfying ${\mathbf f}(z^*)={\mathbf f}(z)^*$. If ${\mathbf f}$ is bounded on $\{z\in \mathbb{C}: |\im z|>M\}$ for some positive number $M$, then Liouville's theorem ensures that ${\mathbf f}$ can be put into the form \eqref{ansatz1}. We can obtain such functions also satisfying ${\mathbf f}(\R)\subset S^2$ by setting $y$ and $x$ to appropriate trigonometric functions of the same variable in \eqref{S2paramrational}. However, there are complications in controlling the number of poles (which corresponds to the soliton number).

While the numerical iteration method explained in Section~\ref{sec:iteration} is far more flexible, the method in this section is interesting since it provides initial conditions leading to cusps, as discussed at the end of Section~\ref{sec:numerics}.

\subsection{Numerical time-evolution}
\label{sec:numerics}
As a check of our results as well as an aid to develop intuition,  we have developed two independent numerical implementations to solve \eqref{hwm} using the ansatz \eqref{ansatz}.

Examples of the time evolutions of two- and three-soliton solutions for the real-line case are displayed in Fig. \ref{fig:two_soliton} and Fig. \ref{fig:three_soliton}, respectively.\footnote{A Mathematica notebook to generate a video of the time evolution is available in the source files of the arXiv submission.} The figures also give the energy density\footnote{The sign difference in \eqref{eq:energy_density} versus \cite{zhou2015} is due to our convention for the Hilbert transform in \eqref{hilbert_line}.} \cite{zhou2015}
\begin{equation}
\label{eq:energy_density}
 \varepsilon \coloneqq - \bm \cdot H\bm_x
\end{equation} 
of the respective solutions. The energy density exhibits the individual solitons as localized peaks.
\begin{figure}[h!]
\begin{tikzpicture}
\def\a{-3.5};
\def\b{2.2};
\def\c{4.25};
\node at (0,0) {\includegraphics[width=\textwidth]{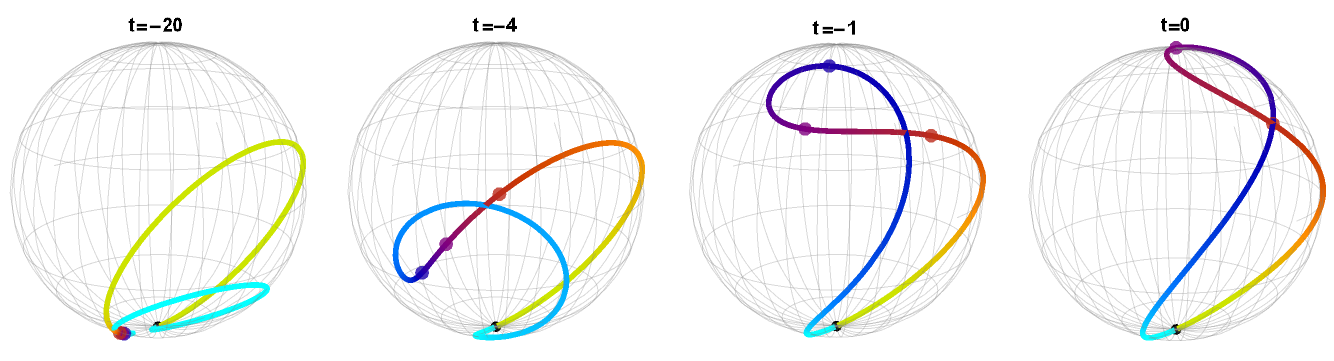}};
\node at (0,\a) {\includegraphics[width=\textwidth]{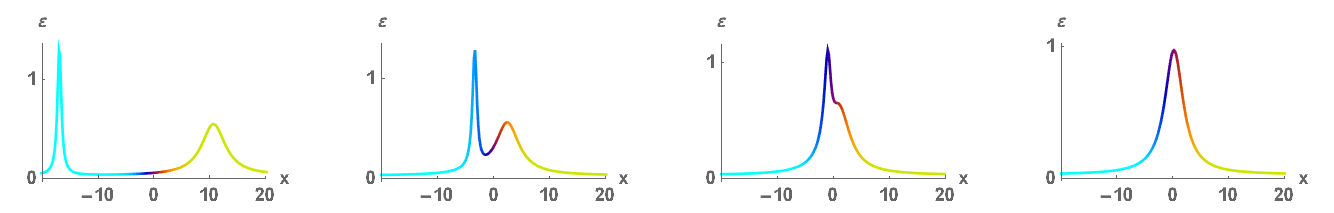}};
\node at (0,2*\a) {\includegraphics[width=\textwidth]{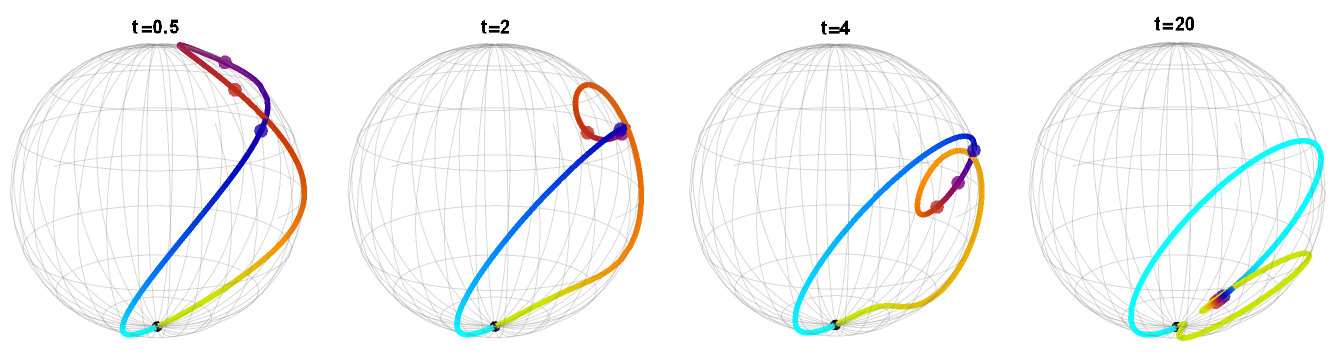}};
\node at (0,3*\a) {\includegraphics[width=\textwidth]{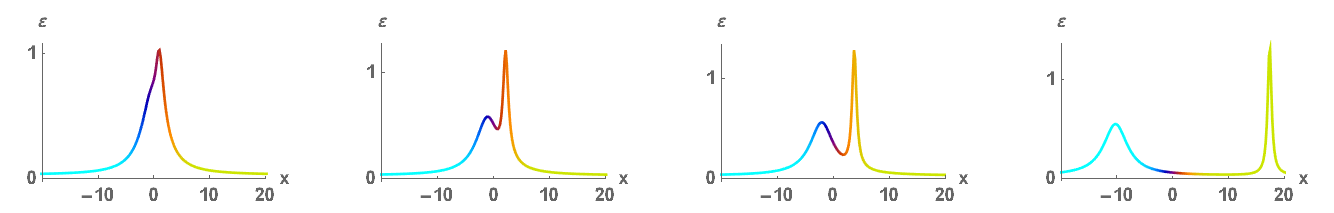}};
\node at (-0.26,3.7*\a-0.3) {\includegraphics[scale=1]{legend.png}};
\node at (0,3.7*\a+0.4) {$x$};
\foreach \x in {0,2,4}
{
\draw[gray] (-8.5,\b+\x*\a) -- (8.5,\b+\x*\a);
};

\foreach \x in {0,...,4}
{
\draw[gray] (-8.5+\x*\c,\b) -- (-8.5+\x*\c,\b+4*\a);
};

\foreach \y in {0,2}
{
\foreach \x in {0,...,3}
{
\fill[white] (-7.5+\x*\c,\b+\y*\a-0.5) rectangle (-4.5+\x*\c,\b+\y*\a-0.05);
}
};

\node at (-6.2,\b-0.3) {\scriptsize $t=-20$};
\node at (-6.4+\c,\b-0.3) {\scriptsize $t=-4$};
\node at (-6.4+2*\c,\b-0.3) {\scriptsize $t=-1$};
\node at (-6.4+3*\c,\b-0.3) {\scriptsize $t=0$};

\node at (-6.2,\b+2*\a-0.3) {\scriptsize $t=0.5$};
\node at (-6.3+\c,\b+2*\a-0.3) {\scriptsize $t=2$};
\node at (-6.4+2*\c,\b+2*\a-0.3) {\scriptsize $t=4$};
\node at (-6.4+3*\c,\b+2*\a-0.3) {\scriptsize $t=20$};

\end{tikzpicture}
\caption{The time evolution of the two-soliton solution with initial data \eqref{eq:initial_data_two_soliton}. Each frame shows the spin configuration $\bm(x,t)$ (colored line on unit sphere) and energy density in \eqref{eq:energy_density} (lower plot)  at one moment in time $t$ given at the top. The curve on the sphere shows the change of the spin $\bm(x,t)$ along the $x$-axis, with colors indicating the $x$-axis location according to the legend underneath. Note that the color gradient is non-linear, changing linearly only close to the origin $x=0$ but approaching a constant color away from the origin exponentially fast. The vacuum direction is indicated by a black dot (south pole). For three distinguished points $x=-1$ (blue), $x=0$ (purple) and $x=1$ (red) on the $x$-axis, we have indicated the corresponding spin $\bm(x,t)$ with a dot. The energy density plots show the solitons as localized lumps. 
}
\label{fig:two_soliton}
\end{figure}

\begin{figure}[h!]
\begin{tikzpicture}
\def\a{-3.5};
\def\b{2.2};
\def\c{4.25};
\node at (0,0) {\includegraphics[width=\textwidth]{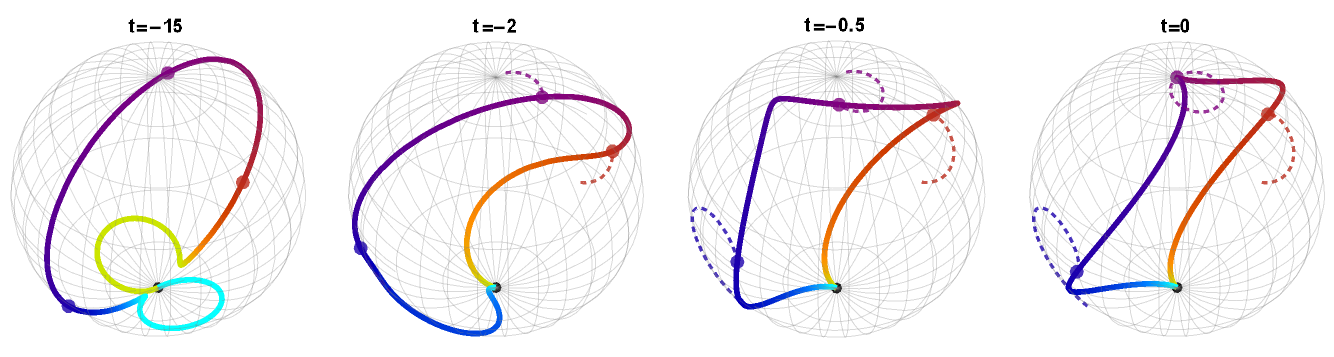}};
\node at (0,\a) {\includegraphics[width=\textwidth]{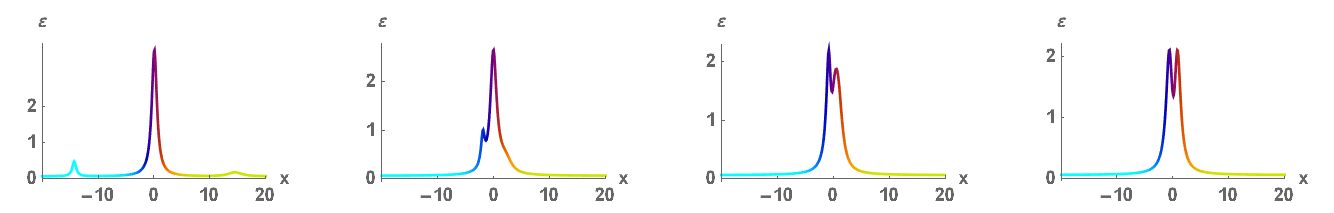}};
\node at (0,2*\a) {\includegraphics[width=\textwidth]{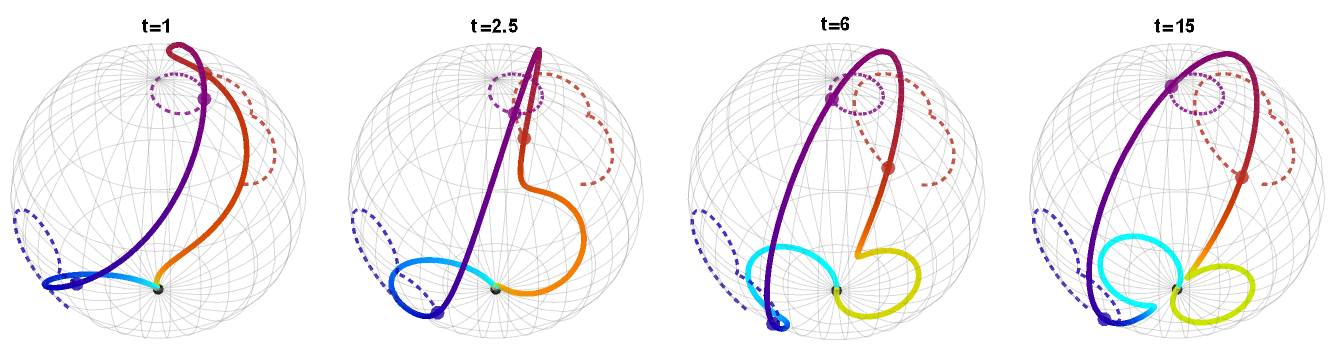}};
\node at (0,3*\a) {\includegraphics[width=\textwidth]{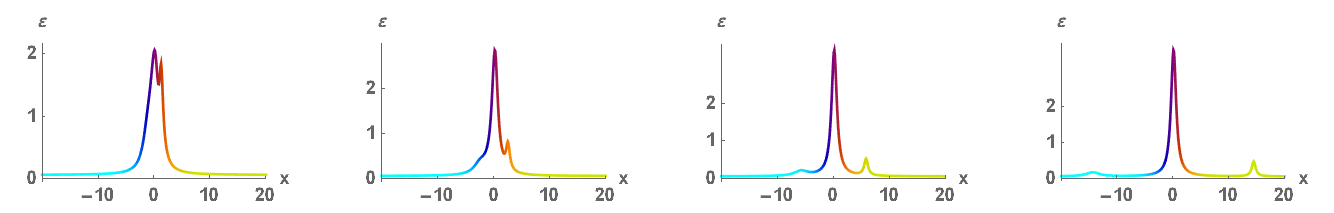}};
\node at (-0.26,3.7*\a-0.3) {\includegraphics[scale=1]{legend.png}};
\node at (0,3.7*\a+0.4) {$x$};
\foreach \x in {0,2,4}
{
\draw[gray] (-8.5,\b+\x*\a) -- (8.5,\b+\x*\a);
};

\foreach \x in {0,...,4}
{
\draw[gray] (-8.5+\x*\c,\b) -- (-8.5+\x*\c,\b+4*\a);
};

\foreach \y in {0,2}
{
\foreach \x in {0,...,3}
{
\fill[white] (-7.5+\x*\c,\b+\y*\a-0.5) rectangle (-4.5+\x*\c,\b+\y*\a-0.05);
}
};

\node at (-6.2,\b-0.3) {\scriptsize $t=-15$};
\node at (-6.4+\c,\b-0.3) {\scriptsize $t=-2$};
\node at (-6.4+2*\c,\b-0.3) {\scriptsize $t=-0.5$};
\node at (-6.4+3*\c,\b-0.3) {\scriptsize $t=0$};

\node at (-6.2,\b+2*\a-0.3) {\scriptsize $t=1$};
\node at (-6.3+\c,\b+2*\a-0.3) {\scriptsize $t=2.5$};
\node at (-6.4+2*\c,\b+2*\a-0.3) {\scriptsize $t=6$};
\node at (-6.4+3*\c,\b+2*\a-0.3) {\scriptsize $t=15$};

\end{tikzpicture}
\caption{The time evolution of the three-soliton solution with initial data \eqref{eq:initial_data_three_soliton}, following the lay-out of Fig. \ref{fig:two_soliton}. 
In addition,  the time evolution of the spin $\bm(x,t)$ at the three distinguished points $x=-1$ (blue), $x=0$ (purple) and $x=1$ (red) is traced by a dashed line, starting at time $t=-15$. This numerical solution 
 exhibits cusps in both $x$ and $t$ at $t=0$. 
}
\label{fig:three_soliton}
\end{figure}

The two-soliton solution in Fig. \ref{fig:two_soliton} has initial data \eqref{eq:initial_data_two_soliton}, and the time evolution of the spin $\bm$, which is visualized by the colored curves on the sphere, is complicated; however, the energy density plots are simpler, showing clearly the motion of two localized energy lumps that collide at $t=0$  
and re-emerge unchanged after the collision, as expected for solitons. The curves on the sphere show that, for $x\to \pm\infty$, the spin $\bm$ always aligns with the vaccum $\bmb$ and, if the solitons are far apart, the behavior of $\bm$ is well approximated by two circles corresponding to a sum to two one-soliton solutions,  as already discussed in Section \ref{sec:constraints1}. 

The three-soliton solution in Fig. \ref{fig:three_soliton} has initial data \eqref{eq:initial_data_three_soliton}, and it 
is similar to the two-soliton case, except that the spin configuration $\bm(x,t)$ can be describes by three circles when the solitons are far apart, and, in the special case we consider, one of these circles is stationary. 
Moreover, the spin $\bm$ at the origin $x=0$ makes two full precessions during the interaction, and it is located diametrically opposite to $\bmb$ at times $t=0$ and $t\to\pm\infty$; see the purple dotted line in Fig. \ref{fig:three_soliton}. 
It is interesting to note that there seem to be cusps in the time evolution of $\bm(x,t)$ at $x=\pm 1$ and $t=0$ as well as a spatial cusp at $x=0$ and $t=0$. Theorem~\ref{thm} thus seems to apply only to $t > 0$ and $t < 0$ but not to any time interval containing $t=0$; nevertheless, it seems that our numerical solution does not have any problems at $t=0$. We therefore believe that the HWM equation also has weak solutions with cusps. 

We computed the time evolution of many multi-soliton initial data obtained with random initial data and the numeric iteration procedure described in Section~\ref{sec:iteration}, and we did not find any cusp-like singularity in any of these solutions. 
However, we observed cusps like in Fig.~\ref{fig:three_soliton} also in the two-solition solution shown in Fig.~\ref{fig:two_soliton} (they become visible when tracing the time evolution of $\bm(x,t)$ at the points $x\approx \pm 1.45$). We thus believe that, while cusps exist for the HWM equation, they are rare, and the exact method described in Section~\ref{sec:exact} is useful to generate initial conditions leading to cusps.

\subsubsection{Numerical implementation}

\paragraph{Julia.} Using the Julia programming language \cite{bezanson2017} we have developed a program to evolve any form of initial data according to the HWM equation \eqref{hwm} along the lines of \cite{pelloni2000}, allowing us to check for soliton solutions by providing only an initial profile: due to numerical inaccuracies, the time evolution converges numerically only for initial conditions corresponding to $N$-soliton solutions.
Our program is a  low-level implementation of a complicated system of ODE's obtained by discretizing the HWM equation \eqref{hwm}, and typical running times are $10-120$ mins  on a regular laptop, depending on the desired accuracy and time interval. 
It turned out to be unnecessary to manually enforce the solution to remain on the unit sphere as time evolves (the HWM equation preserves the norm as shown in \eqref{eq:length_preserving}, but numerical inaccuracies could in principle spoil this property): our solutions had unit norm, $\bm^2=1$, with high accuracy ($\sim 10^{-3}$) on any time interval we used. The solutions displayed in Figs. \ref{fig:two_soliton}--\ref{fig:three_soliton} were checked using this implementation. 

\paragraph{Mathematica.} We used Mathematics to solve the second-order system \eqref{sjdotthm}--\eqref{ajddotthm} belonging to the rational case as well as \eqref{sjdotthm1}--\eqref{ajddotthm1} belonging to the trigonometric case. This runs very fast (typically $\ll 1 s$ on a regular laptop),  and it yields accurate results (numerical inaccuracies are invisible in our plots). The plots in Figs. \ref{fig:two_soliton}--\ref{fig:three_soliton} were produced with data from this implementation.

\section{Discussion}
\label{sec:discussion}
In this paper we discussed a new family of multi-soliton solutions of the half-wave maps equation,  and we showed that the time evolution of the spins and poles parametrizing these solutions is given by an integrable spin Calogero-Moser system. We also developed numerical methods to find explicit examples of these solitons.

It is interesting to note that our results parallel classical results on the Benjamin-Ono equation and its relation to the rational and trigonometric CM system \cite{polychronakos1995,stone2008,abanov2009}. 
In fact, there are two such relations which seem unrelated: first, one obtains the Benjamin-Ono equation as a hydrodynamic description of a CM system \cite{polychronakos1995,abanov2009}. 
Second, one can obtain exact multi-soliton solutions of the BO equation by a pole-ansatz and with poles satisfying the dynamics of the CM system \cite{chen1979}; 
however, the latter CM system has no direct physical interpretation since the poles move in the complex plane (rather than on the real line); 
moreover, the time evolution of the poles is given by first order equations, but these first order equations happen to be B\"acklund transformations of a CM system \cite{wojciechowski1982}. 
For the HWM equation, there are two corresponding relations to a spin CM system: first, the HWM equation is derived from an exactly solvable spin-chain model 
which can be obtained as a limit from a spin CM system using Polychronakos' freezing trick \cite{polychronakos1993}. 
Second, the dynamics of our pole solutions is described by a complexified version of this spin CM system which, again, 
has no obvious physical interpretation. Moreover, we do not obtain the spin CM system directly, but it arises from a system of first order equations which, 
as we show, is a B\"acklund transformation of this spin CM system. 

The results presented in this paper suggest several avenues for further investigation.

\begin{enumerate}

\item The main results in this paper, Theorems~\ref{thm} and \ref{thm1}, exclude 
 solutions with cusps in either $x$ or $t$. However, we observed such
cusps in our numerical experiments; see Fig.~\ref{fig:three_soliton}. The Camassa-Holm equation \cite{camassa1993} is known to admit peakons: soliton-like weak solutions with a discontinuous spatial derivative. It would be interesting to investigate the existence of analogous weak solutions in the HWM equation.

\item To our knowledge, the B\"acklund transformation 
 of the spin CM model that we found was not known before; it would be interesting to generalize it to other spin CM systems and study it systematically.

\item As the HWM equation can be interpreted as a classical continuum limit of the Haldane-Shastry spin chain, it is natural to investigate whether solitons exist for similar limits of related models. We propose to study the continuum dynamics of the elliptic spin CM system (and the corresponding hyperbolic, trigonometric, and rational systems via elliptic degeneration). Via freezing \cite{polychronakos1993}, such a continuum description would be expected to yield a classical hydrodynamic limit of the Inozemtsev spin chain \cite{inozemtsev1990connection} and thus provide an elliptic analog of the HWM equation.

\item The recently rediscovered partially anisotropic Haldane-Shastry model \cite{uglov1995trigonometric,lamers2018resurrecting} is another natural candidate to admit an interesting, integrable continuum description. The resulting 
soliton equation would expectedly be a $q$-deformation of the HWM equation and have solitons governed by Ruijsenaars-Schneider spin-pole dynamics. 

\item To our knowledge, a spin-pole ansatz was first used in \cite{gibbons1984generalisation} to obtain solution solutions of the Boomeron equation \cite{calogero1976coupled}. It would be interesting to investigate the applicability of such an ansatz to other models. The classical Heisenberg ferromagnet equation, $\bm_t=\bm\wedge\bm_{xx}$, is a natural place to start. While the soliton solutions of this model can be obtained through inverse scattering \cite{takhtajan1977,demontis2018b}, the motion of the poles associated with these solitons has not to our knowledge been studied before. The Heisenberg ferromagnet equation is gauge-equivalent to the nonlinear Schr\"{o}dinger equation \cite{zakharov1979}, whose solutions have poles that satisfy constrained CM dynamics \cite{hone1997}.  Thus, it would also be interesting to link the HWM equation to a nonlinear Schr\"{o}dinger-type equation. A natural candidate for such an equation is the Hilbert-nonlocal nonlinear Schr\"{o}dinger equation \cite{matsuno2000}.

\end{enumerate}

\appendix
\section{Proofs}\label{app:proofs} 
We give detailed proofs of the results presented in the main text, including computational details. 

Our arguments are such that they apply to the HWM equation on the real line (Section~\ref{sec2}) and the periodic HWM equation (Section~\ref{sec3}), using the special functions $\alpha(x)$ and $V(x)$ introduced in \eqref{alphaV} and \eqref{alphaV1}, respectively. 
Moreover, we generalize the results presented in the main text by allowing {\em complex}-valued solutions $\bm(x,t)$ of the HWM equation in \eqref{hwm} satisfying 
\begin{equation} 
\bm(x,t)^2 = \rho^2, 
\end{equation} 
where $\rho\in\C$ is an arbitrary complex constant,  and we allow for a more general pole ansatz
\begin{equation} 
\label{generalansatz}
\bm(x,t)=\bmb+\ii\sum_{j=1}^N \bs_j(t)\alpha(x-a_j(t))-\ii\sum_{j=1}^M \bt_j(t)\alpha(x-b_j(t))  
\end{equation}  
where 
\begin{equation} 
(a_j(t), \bs_j(t))\in\C_+\times \C^3 \quad (j=1,\ldots,N),\quad (b_k(t), \bt_k(t))\in\C_-\times \C^3\quad (k=1,\ldots,M) , 
\end{equation} 
with arbitrary non-negative integers $N,M$; the real-solutions case discussed in the main text corresponds to $\rho=1$ and 
\begin{equation} 
\label{realsolution} 
M=N; \quad b_j(t)=a_j(t)^*,\quad \bt_j(t)=\bs_j(t)^*\quad (j=1,\ldots,N). 
\end{equation} 
While the real-solution case is probably most interesting from a physics point of view, the more general complex solutions are interesting mathematically. 
Moreover, by using the short-hand notation introduced in \eqref{absr}, we get this generalization for free: the only difference is that we have 
\begin{equation} 
\label{absr1} 
(a_j,\bs_j,r_j)=\begin{cases}
(a_j,\bs_j,+) & (j=1,\ldots,N) \\
(b_{j-N},\bt_{j-N},-) & (j=N+1,\ldots N+M) 
\end{cases} 
\end{equation}  
instead of \eqref{absr} and $N+M$ instead of $2N$, and thus one can write \eqref{generalansatz} as
\begin{equation}\label{ansatzproof}
\bm(x,t)=\bmb+\ii\sum_{j=1}^{\cN} r_j\bs_j(t) \alpha(x-a_j(t)),\quad \cN\coloneqq N+M, 
\end{equation}
which is not more complicated in the general case. However, when inserting \eqref{absr1}, the special case  \eqref{realsolution} becomes simpler in that the equations obeyed by $(\bt_j,b_j)_{j=1}^M$ can be obtained by complex conjugation from the ones for $(\bs_j,a_j)_{j=1}^N$ and thus can be ignored.

In the following sections, we present our results and their proofs in a formal mathematical language, mirroring our more informal discussion in Section~\ref{sec2}.
 
\subsection{Constraints} 
\label{lemma1proof}
\begin{lemma}\label{normlemmaperiodic}
The function $\bm(x,t)$ in \eqref{generalansatz} satisfies $\bm(x,t)^2=\rho^2$ if and only if 
\begin{subequations}\label{normconditionsperiodic}
\begin{align}
\bs_j^2=0, \quad 
\bs_j\cdot \bigg(\ii \bmb-\ssum{k}{j}{N} \bs_k \alpha(a_j-a_k)   + \sum_{k=1}^M \bt_k \alpha(a_j-b_k) \bigg)=0 \label{con11} 
\end{align}
for $j=1,\ldots,N$, 
\begin{align}
\bt_j^2=0, \quad 
\bt_j\cdot \bigg(\ii \bmb+ \ssum{k}{j}{M} \bt_k \alpha(b_j-b_k)   - \sum_{k=1}^N \bs_k \alpha(b_j-a_k) \bigg)=0 \label{con22}
\end{align}
for $j=1,\ldots,M$, and 
\begin{equation}
\rho^2 =\bmb^2 + \kappa^2\bigg(\sum_{j=1}^N\bs_j-\sum_{j=1}^M\bt_j \bigg)^2.
 \end{equation}
\end{subequations}
\end{lemma}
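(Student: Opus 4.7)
I would work throughout in the compact notation of \eqref{ansatzproof}--\eqref{absr1}, so that $\bm = \bmb + \ii\sum_{j=1}^{\cN} r_j \bs_j\,\alpha(x-a_j)$ with $\cN = N+M$ and $r_j = \pm 1$ encoding the upper/lower half-plane data. Squaring directly gives
$\bm^2 = \bmb^2 + 2\ii\,\bmb\cdot\sum_j r_j \bs_j\,\alpha(x-a_j) - \sum_{j,k=1}^{\cN} r_j r_k (\bs_j\cdot\bs_k)\,\alpha(x-a_j)\alpha(x-a_k)$.
The strategy is to reduce the double sum to an expression involving only the functions $1$, $\alpha(x-a_j)$, and $V(x-a_j)$, and then invoke linear independence of these functions of $x$ (they have, respectively, no pole, simple poles at distinct $a_j$, and double poles at distinct $a_j$, both on $\C$ and on $\C/L\Z$) to equate their coefficients with those of the constant $\rho^2$.

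The reduction uses two facts that hold uniformly in the rational ($\kappa=0$) and trigonometric cases: the diagonal identity $\alpha(x-a_j)^2 = V(x-a_j) - \kappa^2$ and the functional identity $\alpha(x-a_j)\alpha(x-a_k) = \alpha(a_j-a_k)[\alpha(x-a_j) - \alpha(x-a_k)] - \kappa^2$ for $j\neq k$. In the off-diagonal contribution I would relabel $j\leftrightarrow k$ in the $-\alpha(x-a_k)$ piece and use the oddness of $\alpha$ to symmetrize, producing a single sum $-2\sum_{j\neq k} r_j r_k (\bs_j\cdot\bs_k)\,\alpha(a_j-a_k)\,\alpha(x-a_j)$.

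Collecting all contributions, the coefficient of $V(x-a_j)$ becomes $-\bs_j^2$, the coefficient of $\alpha(x-a_j)$ becomes $2 r_j\,\bs_j\cdot\bigl[\ii\bmb - \ssum{k}{j}{\cN} r_k \bs_k\,\alpha(a_j-a_k)\bigr]$, and the constant part equals $\bmb^2 + \kappa^2 \sum_j \bs_j^2 + \kappa^2 \sum_{j\neq k} r_j r_k\,\bs_j\cdot\bs_k$. Using $r_j^2 = 1$, the two $\kappa^2$-terms combine into $\kappa^2\bigl(\sum_{j=1}^{\cN} r_j \bs_j\bigr)^2$, which via \eqref{absr1} is exactly $\kappa^2\bigl(\sum_{j=1}^N \bs_j - \sum_{j=1}^M \bt_j\bigr)^2$. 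By linear independence, $\bm^2 = \rho^2$ is equivalent to all three coefficient classes matching those of the constant $\rho^2$: this yields $\bs_j^2=0$ and the bracketed scalar condition for each $j=1,\ldots,\cN$, plus the identity for the constant term. Splitting the index range $j\in\{1,\ldots,N\}$ (where $r_j=+1$) from $j\in\{N+1,\ldots,\cN\}$ (where $r_j=-1$ and $\alpha$'s oddness flips the sign in front of the $\bt$-sum) then reproduces \eqref{con11} and \eqref{con22}, and the constant equation is the last relation in \eqref{normconditionsperiodic}; the ``if'' direction is automatic because the reduction is an identity.

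The only real obstacle I anticipate is the bookkeeping: keeping signs straight in the symmetrization step and in the translation between the unified $(a_j,\bs_j,r_j)$ notation and the split $(a_j,\bs_j)$/$(b_j,\bt_j)$ form, and verifying cleanly the combinatorial collapse $\sum_j \bs_j^2 + \sum_{j\neq k} r_j r_k\,\bs_j\cdot\bs_k = \bigl(\sum_j r_j \bs_j\bigr)^2$ that delivers the constant term in the exact form stated.
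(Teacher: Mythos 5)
Your proposal is correct and follows essentially the same route as the paper's proof of Lemma~\ref{normlemmaperiodic}: expand $\bm^2$ in the unified $(a_j,\bs_j,r_j)$ notation, apply the diagonal identity $\alpha(x-a_j)^2=V(x-a_j)-\kappa^2$ and the functional identity \eqref{Id1}, symmetrize the off-diagonal sum using the oddness of $\alpha$, and conclude by linear independence of $1$, $\alpha(x-a_j)$, $V(x-a_j)$ before splitting the index range via \eqref{absr1}. One cosmetic remark: in the final translation to \eqref{con22} the sign flip in front of the $\bt$-sum comes from the factors $r_k=-1$ (after dividing the vanishing coefficient by $2r_j$), not from the oddness of $\alpha$, but this does not affect the correctness of your argument.
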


\begin{proof} 
Using \eqref{ansatzproof}, we compute
\begin{align*}
\bm^2=\bmb^2+2\ii  \bmb\cdot \sum_{j=1}^{\cN}r_j\bs_j\alpha(x-a_j)-\sum_{j=1}^{\cN}\sum_{k=1}^{\cN}r_jr_k (\bs_j\cdot\bs_k)\alpha(x-a_j)\alpha(x-a_k) . 
\end{align*}
We evaluate the double sum using 
\begin{equation} 
\label{Id0} 
\alpha(x-a_j)^2=V(x-a_j)-\kappa^2
\end{equation} 
for $k=j$ and the identity
\begin{equation}
\label{Id1} 
\alpha(x-a_j)\alpha(x-a_k)=\alpha(a_j-a_k)\big(\alpha(x-a_j)-\alpha(x-a_k)\big)-\kappa^2
\end{equation}
for $k\neq j$. This gives
\begin{align*} 
\bm^2= \bmb^2+\kappa^2\sum_{j,k=1}^N r_j r_k (\bs_j\cdot\bs_k) - \sum_{j=1}^{\cN}\bs_j^2 V(x-a_j) +2\ii  \bmb\cdot \sum_{j=1}^{\cN}r_j\bs_j\alpha(x-a_j)\\
-2\sum_{j=1}^{\cN}\ssum{k}{j}{\cN}r_jr_k(\bs_j\cdot\bs_k)\alpha(a_j-a_k)\alpha(x-a_j), 
\end{align*} 
which implies that $\bm^2=\rho^2$ is equivalent to 
\begin{align}\label{constraintgeneral}
\bs_j^2=0,\quad \bs_j\cdot\bigg( \ii\bmb - \ssum{k}{j}{\cN} \bs_k\alpha(a_j-a_k) \bigg)=0 \quad (j=1,\ldots,\cN), 
\end{align} 
\begin{align} 
\rho^2 =  \bmb^2+\kappa^2\bigg(\sum_{j=1}^{\cN} r_j\bs_j \bigg)^2 . 
\end{align} 
Inserting \eqref{absr1} we obtain the result. 
\end{proof} 

\subsection{Spin-pole dynamics} 
\label{theorem1proof}
\begin{proposition}\label{poleansatzthm2}
The function $\bm(x,t)$ in \eqref{generalansatz} satisfies the HWM equation \eqref{hwm} if the constraints in Lemma~\ref{lemma1proof} and the following ordinary differential equations are fulfilled, 
\begin{subequations}
\begin{align}
\dot{\bs}_j=&-2\ssum{k}{j}{N}(\bs_j\wedge\bs_k)V(a_j-a_k)\quad (j=1,\ldots,N), \label{sjeq11} \\
\dot{\bt}_j=&-2\ssum{k}{j}{M}(\bt_j\wedge\bt_k)V(b_j-b_k)\quad (j=1,\ldots,M), \label{sjeq22} 
\end{align}
\end{subequations}
and 
\begin{subequations}
\begin{align}
\dot{a}_j \bs_j=& -\bs_j \wedge \bigg( \ii \bmb-\ssum{k}{j}{N} \bs_k \alpha(a_j-a_k) + \sum_{k=1}^M \bt_k \alpha(a_j-b_k)  \bigg) \quad (j=1,\ldots,N),  \label{ajeq11}\\
\dot{b}_j \bt_j=& \quad \bt_j \wedge \bigg( \ii \bmb  + \ssum{k}{j}{M} \bt_k \alpha(b_j-b_k) - \sum_{k=1}^N \bs_k \alpha(b_j-a_k) \bigg) \quad (j=1,\ldots,M). \label{ajeq22}
\end{align}
\end{subequations}
\end{proposition}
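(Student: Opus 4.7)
The strategy is to substitute the ansatz \eqref{ansatzproof} into both sides of $\bm_t=\bm\wedge H\bm_x$, express each side as a linear combination of the functions $\alpha(x-a_j)$ and $V(x-a_j)$ ($j=1,\ldots,\cN$), and then match coefficients. Linear independence of these functions will promote the matching to an equivalence, and under the unified notation \eqref{absr1} the equations I obtain will translate directly into the split form \eqref{sjeq11}--\eqref{ajeq22}.

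First I would compute the left-hand side. Differentiating \eqref{ansatzproof} in $t$ and using $\alpha'(y)=-V(y)$ (which holds in both the rational case \eqref{alphaV} and the periodic case \eqref{alphaV1}), I get
\begin{equation*}
\bm_t=\ii\sum_{j=1}^{\cN} r_j\bigl(\dot{\bs}_j \alpha(x-a_j)+\bs_j \dot{a}_j V(x-a_j)\bigr),
\end{equation*}
so the coefficient of $\alpha(x-a_j)$ is $\ii r_j\dot{\bs}_j$ and the coefficient of $V(x-a_j)$ is $\ii r_j\bs_j\dot a_j$.

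Next I would compute $H\bm_x$. The key input is that $\alpha(x-a_j)$ is the boundary value of a function analytic in the upper (resp.\ lower) half-plane when $r_j=+$ (resp.\ $-$), so $H\alpha(x-a_j)=-\ii r_j\alpha(x-a_j)$, while $H\bmb=0$ (constants are annihilated by $H$ in both the real-line and periodic settings). Since $H$ commutes with $\partial_x$, this yields
\begin{equation*}
H\bm_x=-\sum_{k=1}^{\cN}\bs_k V(x-a_k).
\end{equation*}

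The decisive step is then the simplification of
\begin{equation*}
\bm\wedge H\bm_x=-\bmb\wedge\sum_{k=1}^{\cN}\bs_k V(x-a_k)-\ii\sum_{j,k=1}^{\cN}r_j(\bs_j\wedge\bs_k)\alpha(x-a_j)V(x-a_k).
\end{equation*}
The $j=k$ contributions vanish. For $j\neq k$ I would invoke the identity obtained by differentiating \eqref{Id1} in $a_k$,
\begin{equation*}
\alpha(x-a_j)V(x-a_k)=V(a_j-a_k)\bigl(\alpha(x-a_j)-\alpha(x-a_k)\bigr)-\alpha(a_j-a_k)V(x-a_k),
\end{equation*}
substitute it in, and then relabel $j\leftrightarrow k$ in the middle term (using $V$ even, $\alpha$ odd, $\bs_j\wedge\bs_k$ antisymmetric). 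This converts the double sum into a piece proportional to $(r_j+r_k)(\bs_j\wedge\bs_k)V(a_j-a_k)\alpha(x-a_j)$ together with a piece of the form $r_k(\bs_j\wedge\bs_k)\alpha(a_j-a_k)V(x-a_j)$, both with clean coefficients of $\alpha(x-a_j)$ and $V(x-a_j)$ respectively.

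Finally I would match the coefficients of $\alpha(x-a_j)$ and $V(x-a_j)$ on both sides. The $\alpha$-coefficient match gives $r_j\dot{\bs}_j=-\sum_{k\neq j}(r_j+r_k)(\bs_j\wedge\bs_k)V(a_j-a_k)$, which becomes $\dot{\bs}_j=-\sum_{k\neq j}(1+r_jr_k)(\bs_j\wedge\bs_k)V(a_j-a_k)$ after multiplying through by $r_j$ and using $r_j^2=1$; the factor $(1+r_jr_k)$ is $2$ when the signs agree and $0$ otherwise, which under \eqref{absr1} gives exactly \eqref{sjeq11} for the $\bs$-sector and \eqref{sjeq22} for the $\bt$-sector. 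The $V$-coefficient match gives $\ii r_j\bs_j\dot a_j=-\bmb\wedge\bs_j+\ii\sum_{k\neq j}r_k(\bs_j\wedge\bs_k)\alpha(a_j-a_k)$, which rearranges (using $\bmb\wedge\bs_j=-\bs_j\wedge\bmb$ and pulling out $-\ii$) to $\bs_j\dot a_j=-r_j\bs_j\wedge\bigl(\ii\bmb-\sum_{k\neq j}r_k\bs_k\alpha(a_j-a_k)\bigr)$, which in the $(\bs,\bt)$ notation is \eqref{ajeq11} and \eqref{ajeq22}.

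The main obstacle I anticipate is purely bookkeeping: getting the signs in the $r_j$/$r_k$ swaps straight and verifying that the $-\kappa^2$ constants appearing in \eqref{Id0}--\eqref{Id1} (present in the trigonometric case) indeed cancel on both sides. These constants contribute only to the overall constant part of $\bm\wedge H\bm_x$ and of $\bm_t$; since $\bm^2=\rho^2$ is constant (via the constraints of Lemma \ref{normlemmaperiodic}), these contributions are consistent and drop out after using $\bs_j\wedge\bs_j=0$. With this verified, the coefficient-matching is conclusive because $\{\alpha(x-a_j),V(x-a_j)\}_{j=1}^\cN$ is a linearly independent set, so the ODEs are not merely sufficient but in fact equivalent to $\bm_t=\bm\wedge H\bm_x$ on the pole ansatz.
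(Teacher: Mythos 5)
Your proposal is correct and takes essentially the same route as the paper's proof in Appendix~\ref{theorem1proof}: substitute the ansatz \eqref{ansatzproof}, use $H\alpha(x-a_j)=-\ii r_j\alpha(x-a_j)$ to get $H\bm_x=-\sum_{k=1}^{\cN}\bs_k V(x-a_k)$, apply the identity obtained by differentiating \eqref{Id1} with respect to $a_k$, relabel $j\leftrightarrow k$ using the parity of $\alpha$, $V$ and the antisymmetry of $\wedge$, and match coefficients of the linearly independent functions $\alpha(x-a_j)$, $V(x-a_j)$ before specializing via \eqref{absr1}. Your anticipated concern about the $-\kappa^2$ constants is in fact moot: differentiating \eqref{Id1} in $a_k$ removes the constant term, and the diagonal $j=k$ terms vanish by $\bs_j\wedge\bs_j=\bzero$, so no constant terms appear on either side of the coefficient matching.
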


\begin{proof}
Inserting \eqref{ansatzproof} into \eqref{hwm} we find, on the left-hand side, 
\begin{equation}\label{mtproof}
\bm_t=\ii\sum_{j=1}^{\cN} r_j\big(\dot{\bs}_j \alpha(x-a_j)+ \bs_j \dot{a}_j V(x-a_j)\big)
\end{equation}
using $\alpha'(x-a_j)=-V(x-a_j)$. To compute the right-hand side of \eqref{hwm}, we use that
\begin{align*}
H\alpha(x-a_j)=-\ii r_j \alpha(x-a_j) \quad (j=1,\ldots,\cN)
\end{align*}
implying 
\begin{align*}
H\bm_x= -\sum_{k=1}^{\cN}  \bs_k V(x-a_k), 
\end{align*}
as explained in the main text (note that the arguments given there hold true even in the trigonometric case), and with \eqref{ansatzproof},  
\begin{align*}
\bm\wedge H\bm_x
=\sum_{j=1}^{\cN} (\bmb\wedge\bs_j) \alpha'(x-a_j)-\ii\sum_{j=1}^{\cN}\ssum{k}{j}{\cN} r_j(\bs_j\wedge\bs_k) \alpha(x-a_j)V(x-a_k). 
\end{align*}
We insert 
\begin{equation}
\label{Idkey} 
\alpha(x-a_j)\alpha'(x-a_k) =-\alpha(a_j-a_k)\alpha'(x-a_k) + \alpha'(a_j-a_k)\big(\alpha(x-a_j)-\alpha(x-a_k)\big) 
\end{equation}
obtained from \eqref{Id1} by differentiation with respect to the variable $a_k$,  and with  $\alpha'(x)=-V(x)$:  
\begin{align*}
\bm\wedge H\bm_x =- \sum_{j=1}^{\cN} (\bmb\wedge\bs_j) V(x-a_j)+\ii\sum_{j=1}^{\cN}\ssum{k}{j}{\cN} r_j(\bs_j\wedge\bs_k)\alpha(a_j-a_k)V(x-a_k)  \\
-\ii\sum_{j=1}^{\cN}\ssum{k}{j}{\cN} r_j(\bs_j\wedge\bs_k)  V(a_j-a_k)\big(\alpha(x-a_j)-\alpha(x-a_k)\big) .
\end{align*}
By using the antisymmetry of the cross product $\wedge$, $\alpha(a_j-a_k)=-\alpha(a_k-a_j)$, and swapping some of the summation indices $j\leftrightarrow k$ in the double sums, we can write this as 
\begin{align*}
\bm\wedge H\bm_x
= \sum_{j=1}^{\cN} (\bs_j\wedge\bmb) V(x-a_j)+\ii\sum_{j=1}^{\cN}\ssum{k}{j}{\cN} r_k(\bs_j\wedge\bs_k)\alpha(a_j-a_k)V(x-a_j)  \\
-\ii\sum_{j=1}^{\cN}\ssum{k}{j}{\cN} (r_j+r_k)(\bs_j\wedge\bs_k)  V(a_j-a_k)\alpha(x-a_j).
\end{align*}
This is equal to right-hand side in \eqref{mtproof} if and only if 
\begin{subequations}\label{theorem1eom}
\begin{align}
\dot{\bs}_j=&-\ssum{k}{j}{\cN}(1+r_jr_k)(\bs_j\wedge\bs_k)V(a_j-a_k), \\
\dot{a}_j \bs_j=&  -r_j\bs_j \wedge \sum_{j=1}^{\cN} \bigg( \ii \bmb- \ssum{k}{j}{\cN} r_k\bs_k \alpha(a_j-a_k)       \bigg). \label{dotajgeneral}
\end{align}
\end{subequations}
We insert \eqref{absr1} and obtain the result. 
\end{proof}

\subsection{Consistency of spin-pole dynamics}
\label{app:consistency} 
Equations \eqref{ajeq11}   and \eqref{ajeq22}  are only consistent if their right-hand sides are parallel with $\bs_j$ and $\bt_j$, respectively (otherwise, a contradiction would arise). As explained further below, consistency is guaranteed by the following basic result. 

\begin{lemma} 
\label{lem:bs}
Let $\bs\in\C^3$ be non-zero and such that $\bs^2=0$.  Then, for any $\bv\in\C^3$ satisfying $\bs\cdot\bv=0$, $\bs\wedge\bv$ is parallel with $\bs$: 
\begin{equation} 
\bs\wedge\bv =\frac{(\bs^*\wedge\bs)\cdot\bv}{\bs^*\cdot\bs} \bs . 
\end{equation} 
\end{lemma}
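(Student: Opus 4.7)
The plan is to prove the lemma in two short steps: first establish that $\bs\wedge\bv$ is a scalar multiple of $\bs$, and then pin down the scalar by taking a suitable dot product. For the parallelism, I would apply the BAC--CAB identity in $\C^3$:
\begin{equation*}
\bs\wedge(\bs\wedge\bv)=\bs(\bs\cdot\bv)-(\bs\cdot\bs)\,\bv.
\end{equation*}
By the two hypotheses $\bs^2=0$ and $\bs\cdot\bv=0$ the right-hand side vanishes, so $\bs\wedge\bv$ lies in the kernel of the linear map $L\colon \bw\mapsto \bs\wedge\bw$. The matrix of $L$ is the usual antisymmetric $3\times 3$ matrix whose $2\times 2$ minors are, up to sign, the products $s_is_j$; since $\bs\ne 0$ at least one of these minors is nonzero, so the rank is $2$ and $\ker L$ is one-dimensional. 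As $\bs\in\ker L$, we conclude $\ker L=\C\bs$, and therefore $\bs\wedge\bv=\lambda\bs$ for some $\lambda\in\C$.

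For the second step, I would take the Hermitian dot product of this identity with $\bs^*$. Because $\bs\ne 0$, the scalar $\bs^*\cdot\bs=\sum_i |s_i|^2$ is strictly positive, so it can be divided. Using the cyclic property of the scalar triple product, $\bs^*\cdot(\bs\wedge\bv)=(\bs^*\wedge\bs)\cdot\bv$, and solving for $\lambda$ yields the stated formula
\begin{equation*}
\lambda=\frac{(\bs^*\wedge\bs)\cdot\bv}{\bs^*\cdot\bs}.
\end{equation*}

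There is no serious obstacle in this proof; the two identities ($\bs\wedge(\bs\wedge\bv)$ and the cyclic triple product) hold over $\C^3$ with the complex bilinear form exactly as over $\R^3$, since both are polynomial identities in the components. The only point requiring a sentence of justification is the dimension count $\dim\ker L=1$ for nonzero $\bs$, which is immediate from the rank of the antisymmetric matrix as described above. No appeal to the real/imaginary decomposition of $\bs$ (which the hypothesis $\bs^2=0$ would otherwise make available) is needed.
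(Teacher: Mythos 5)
Your proof is correct, but it takes a genuinely different route from the paper's. The paper proves the lemma as an immediate corollary of its Lemma~\ref{lem:sreps}(b): it expands $\bs\wedge\bv$ in the basis $\{\bs,\bs^*,\bs^*\wedge\bs\}$ of $\C^3$ and checks that the coefficients of $\bs^*$ and $\bs^*\wedge\bs$ vanish --- the first because $\bs\cdot(\bs\wedge\bv)=0$ identically, the second because $(\bs\wedge\bs^*)\cdot(\bs\wedge\bv)=\bs^2(\bs^*\cdot\bv)-(\bs\cdot\bv)(\bs^*\cdot\bs)=0$ by the two hypotheses; the surviving coefficient of $\bs$ is exactly your $\lambda$, extracted via the same triple-product identity $\bs^*\cdot(\bs\wedge\bv)=(\bs^*\wedge\bs)\cdot\bv$ that you use. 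You instead establish parallelism directly: BAC--CAB gives $\bs\wedge(\bs\wedge\bv)=\bs(\bs\cdot\bv)-(\bs\cdot\bs)\bv=0$, so $\bs\wedge\bv$ lies in the kernel of $L\colon\bw\mapsto\bs\wedge\bw$, and your rank count (some $2\times 2$ minor $\pm s_is_j$, in particular $s_i^2$, is nonzero for $\bs\neq 0$, while a $3\times 3$ antisymmetric matrix is always singular) correctly shows $\ker L=\C\bs$; this argument is sound over $\C$, and it rightly avoids the trap of arguing via orthogonality with respect to the complex bilinear form, which fails for isotropic vectors. What each approach buys: the paper's proof is a two-line verification given machinery (the basis lemma, which itself rests on the decomposition $\bs=s(\bn_1+\ii\bn_2)$) that the paper needs anyway to solve its constraints, and it determines all three components of $\bs\wedge\bv$ at once; yours is self-contained, never invokes the real/imaginary structure of $\bs$, and isolates precisely where each hypothesis enters (both only in the vanishing of $\bs\wedge(\bs\wedge\bv)$), at the modest cost of the explicit rank computation.
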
 
(The proof is given in Appendix~\ref{app:bsproof}). 

Lemma~\ref{lem:bs} implies that, due to the constraints in  \eqref{con11} and \eqref{con22},  the right-hand sides of  \eqref{ajeq11}   and \eqref{ajeq22}  are parallel with $\bs_j$ and $\bt_j$, 
respectively, and that the latter two equations are equivalent to  
\begin{subequations} 
\begin{equation} 
\dot a_j= -\frac{\bs_j^*\wedge\bs_j}{\bs_j^*\cdot\bs_j} \cdot  \bigg( \ii \bmb-\ssum{k}{j}{N} \bs_k \alpha(a_j-a_k) + \sum_{k=1}^M \bt_k \alpha(a_j-b_k)  \bigg)\quad (j=1,\ldots,N) \label{ajeq11a}
\end{equation} 
and 
\begin{equation} 
\dot b_j= \frac{\bt_j^*\wedge\bt_j}{\bt_j^*\cdot\bt_j} \cdot  \bigg( \ii \bmb  + \ssum{k}{j}{M} \bt_k \alpha(b_j-b_k) - \sum_{k=1}^N \bs_k \alpha(b_j-a_k) \bigg)\quad (j=1,\ldots,M), \label{ajeq22a} 
\end{equation} 
\end{subequations} 
respectively. This makes manifest that the result in Proposition~\ref{theorem1proof} is consistent. 

\subsection{Relation to spin Calogero-Moser system}
\label{proofddotaj}

\begin{proposition} 
If the constraints in Lemma~\ref{normlemmaperiodic} hold true, then the first order equations in Proposition~\ref{poleansatzthm2} imply 
\begin{subequations} 
\begin{align} 
\ddot a_j = -\ssum{k}{j}{N} 2V'(a_j-a_k)\quad (j=1,\ldots,N), \\
\ddot b_j = -\ssum{k}{j}{M} 2V'(b_j-b_k)\quad (j=1,\ldots,M). 
\end{align} 
\end{subequations} 

\end{proposition}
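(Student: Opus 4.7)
The plan is to differentiate the first-order pole equation from Proposition~\ref{poleansatzthm2} with respect to $t$, and then reduce the resulting vector equation to a scalar equation for $\ddot a_j$ by using the algebraic constraints in Lemma~\ref{normlemmaperiodic} together with the spin EOMs. I will work in the unified notation of \eqref{ansatzproof}--\eqref{absr1}: setting $\bm_j \coloneqq \ii\bmb - \ssum{k}{j}{\cN} r_k \bs_k \alpha(a_j-a_k)$, the first-order equation \eqref{dotajgeneral} reads $\dot a_j \bs_j = -r_j\, \bs_j\wedge\bm_j$, and the constraint from Lemma~\ref{normlemmaperiodic} is $\bs_j^2=0$ together with $\bs_j\cdot\bm_j = 0$ (so that by Lemma~\ref{lem:bs} the equation is in fact a well-posed scalar equation, consistent with having $\bs_j\neq 0$). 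The case for $\ddot b_j$ is then completely analogous, obtained by relabelling according to \eqref{absr1}.

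First I would differentiate the vector equation to obtain
\[
\ddot a_j\, \bs_j + \dot a_j\, \dot{\bs}_j = -r_j\bigl(\dot{\bs}_j\wedge\bm_j + \bs_j\wedge\dot{\bm}_j\bigr),
\]
then take the inner product with $\bs_j^*$ (which is permissible since $\bs_j^*\cdot\bs_j\neq 0$ for non-trivial null complex spins). This yields a scalar equation
\[
\ddot a_j\,(\bs_j^*\cdot\bs_j) = -\dot a_j\,(\bs_j^*\cdot\dot{\bs}_j) - r_j\bigl[(\bs_j^*\wedge\dot{\bs}_j)\cdot\bm_j + (\bs_j^*\wedge\bs_j)\cdot\dot{\bm}_j\bigr].
\]
Next I would substitute (i) the spin EOM \eqref{sjeq11} for $\dot{\bs}_j,\dot{\bs}_k$, (ii) the scalar form of \eqref{dotajgeneral} for $\dot a_j,\dot a_k$, and (iii) the explicit formula for $\dot{\bm}_j$ obtained by term-wise differentiation, using $\alpha'(x)=-V(x)$:
\[
\dot{\bm}_j = -\ssum{k}{j}{\cN} r_k\bigl[\dot{\bs}_k\,\alpha(a_j-a_k) - \bs_k\,V(a_j-a_k)(\dot a_j - \dot a_k)\bigr].
\]
After expansion, I would simplify using the BAC–CAB identity $\bs_j^*\wedge(\bs_j\wedge\bX) = (\bs_j^*\cdot\bX)\bs_j - (\bs_j^*\cdot\bs_j)\bX$, the constraint $\bs_j\cdot\bm_j=0$ together with its time derivative $\dot{\bs}_j\cdot\bm_j + \bs_j\cdot\dot{\bm}_j=0$, and the functional identities \eqref{Id1}, \eqref{Idkey} (and the relation $V=-\alpha'=\alpha^2-\kappa^2$) to collapse products $\alpha(a_j-a_k)\alpha(a_j-a_\ell)$ into sums of $\alpha$'s and to produce $V'(a_j-a_k)$ contributions.

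The mechanism I expect is that all terms involving $\bmb$ and bare $\alpha$ factors reassemble, via the constraint $\bs_j\cdot\bm_j=0$, into either $\bm_j$ (and then vanish against $\bs_j$) or into $\bs_j\cdot\bs_k$ combinations, leaving only the $V'(a_j-a_k)(\bs_j\cdot\bs_k)$ pieces, and that the overall factor $\bs_j^*\cdot\bs_j$ cancels from both sides — yielding the claimed second-order equation. The main obstacle will be purely computational: the bookkeeping of the (triple) sums involving $\bs_j^*,\bs_j,\bs_k,\bs_\ell$ together with the sign factors $r_jr_k$ is intricate, and confirming that the proposed cancellations occur in full requires a careful splitting of the terms into those linear in $\bmb$, linear in $\alpha$, and bilinear in $\alpha$, each of which must vanish (using the constraint) or recombine into $V'$-terms. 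A secondary, minor subtlety is keeping track of the constant $-\kappa^2$ correction in \eqref{Id1} in the trigonometric case, which must drop out so that the same closed form \eqref{ajddotthm1} applies uniformly.
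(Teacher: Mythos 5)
Your plan is correct and follows the same skeleton as the paper's proof in Appendix~\ref{proofddotaj}: differentiate the first-order pole equation, insert the spin equation and the term-wise derivative of $\bb_j$ (your $\bm_j$; your formula for $\dot{\bm}_j$ agrees with the paper's $\dot{\bb}_j$), simplify with the triple-product identities and the constraints $\bs_j^2=0$, $\bs_j\cdot\bb_j=0$, and cancel the three-body terms via the differentiated functional identity \eqref{Idkey}. The genuine difference is where you scalarize. The paper never contracts with $\bs_j^*$: it keeps the equation vectorial, observes that after recombining the $(1+r_jr_k)\dot a_j$-terms from $-\dot{\bs}_j\dot a_j$ with the $r_jr_k(\dot a_j-\dot a_k)$-terms from $\dot{\bb}_j$ the velocities occur \emph{only} in the combinations $(\dot a_j\bs_j)\wedge\bs_k$ and $\bs_j\wedge(\dot a_k\bs_k)$, substitutes the vector equation $\dot a_k\bs_k=-r_k\,\bs_k\wedge\bb_k$ directly, and arrives at $\ddot a_j\,\bs_j=-\ssum{k}{j}{\cN}(1+r_jr_k)(\bs_j\cdot\bs_k)\,V'(a_j-a_k)\,\bs_j$, cancelling $\bs_j$ only at the very end. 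Your early projection onto $\bs_j^*$ is logically valid ($\bs_j^*\cdot\bs_j>0$, and projecting a true equation yields a true scalar equation, which is all you need since the target is scalar), but your stated step (ii) — inserting the \emph{scalar} formulas for $\dot a_j$ and especially $\dot a_k$ as in \eqref{ajeq11a} — imports $\bs_k^*$ and denominators $\bs_k^*\cdot\bs_k$ for every $k$, all of which must conspire to cancel; this makes the bookkeeping markedly heavier than the paper's route and is the one place your execution risks bogging down. The cheap fix is to note that in your projected equation the velocities still only appear as $\bs_j^*\cdot\bigl((\dot a_j\bs_j)\wedge\bs_k\bigr)$ and $(\bs_j^*\wedge\bs_j)\cdot(\dot a_k\bs_k)$ after the same recombination, so you can substitute the vector first-order equation there and the only conjugate that ever appears is the single overall $\bs_j^*$, which divides out at the end. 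Two smaller remarks: your worry about the $-\kappa^2$ constant in \eqref{Id1} is moot, since the proof only ever uses the differentiated identity \eqref{Idkey}, where the constant has already dropped; and the equation you are deriving, with the factor $\bs_j\cdot\bs_k$, is the correct one — the proposition as printed omits $(\bs_j\cdot\bs_k)$ by an evident typo (compare the final line of the paper's proof and \eqref{ajddotthm}).
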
 
\begin{proof} 
We use the shorthand notation 
\begin{equation}\label{bjdefinition}
\bb_j\coloneqq \ii \bmb - \ssum{k}{j}{\cN}r_k\bs_k \alpha(a_j-a_k)
\end{equation}
allowing to write the constraints \eqref{constraintgeneral}  and the equations of motion \eqref{theorem1eom} as
\begin{align}
\bs_j^2=0, \quad 
\bs_j\cdot\bb_j=0 ,\quad 
\dot{\bs}_j=\ssum{k}{j}{\cN}(1+r_j r_k)(\bs_j\wedge\bs_k) \alpha'(a_j-a_k), \quad \bs_j\dot{a}_j=-r_j \bs_j\wedge \bb_j .
\label{alleqs} 
\end{align}
In the following, we only use these four equations and properties of the function $\alpha(x)$. We differentiate the last equation in \eqref{alleqs} with respect to time to obtain
\begin{equation}\label{bsjddotaj}
\bs_j \ddot{a}_j=-\dot{\bs}_j\dot{a}_j- r_j\dot{\bs}_j\wedge \bb_j- r_j \bs_j\wedge\dot{\bb}_j.
\end{equation}
To proceed, we use \eqref{bjdefinition} to compute
\begin{equation*}
\dot{\bb}_j=-\ssum{k}{j}{\cN} r_k\dot{\bs}_k \alpha(a_j-a_k)-\ssum{k}{j}{\cN}  r_k \bs_k \alpha'(a_j-a_k)(\dot{a}_j-\dot{a}_k). 
\end{equation*}
This allows us to write \eqref{bsjddotaj} as  
\begin{equation}\label{bsjddotaj1}
\bs_j \ddot{a}_j=\bc_{j,1}+\bc_{j,2} 
\end{equation}
with 
\begin{align}
\bc_{j,1}:=&\ssum{k}{j}{\cN} r_jr_k (\bs_j\wedge\dot{\bs}_k)  \alpha(a_j-a_k),  \label{bcj1definition} \\
\bc_{j,2}:=&-\dot{\bs}_j\dot{a}_j- r_j \dot{\bs}_j\wedge\bb_j +\ssum{k}{j}{\cN} r_j r_k(\bs_j\wedge\bs_k) \alpha'(a_j-a_k)(\dot{a}_j-\dot{a}_k). \label{bcj2definition}
\end{align}

We first compute $\bc_{j,1}$ using the third equation in \eqref{alleqs}:
\begin{align*}
\bc_{j,1}=&\ssum{k}{j}{\cN} r_jr_k \bs_j\wedge \Bigg(\ssum{l}{k}{\cN} (1+r_k r_l)(\bs_k\wedge\bs_l) \alpha'(a_k-a_l)  \Bigg) \alpha(a_j-a_k) \\
=& \ssum{k}{j}{\cN} \ssum{l}{k}{\cN}r_j(r_k+r_l) \big(\bs_j \wedge(\bs_k\wedge\bs_l)\big) \alpha(a_j-a_k)\alpha'(a_k-a_l) \\
=& \ssum{k}{j}{\cN}r_j(r_k+r_j) \big(\bs_j \wedge(\bs_k\wedge\bs_j)\big) \alpha(a_j-a_k)\alpha'(a_k-a_j) \\
&+ \ssum{k}{j}{\cN} \sssum{l}{j}{k}{\cN}r_j(r_k+r_l) \big(\bs_j \wedge(\bs_k\wedge\bs_l)\big) \alpha(a_j-a_k)\alpha'(a_k-a_l).
\end{align*}
Inserting $\bs_j\wedge(\bs_k\wedge\bs_j)=\bs_j^2\bs_k-(\bs_k\cdot\bs_j)\bs_j$ with $\bs_j^2=0$ and symmetrizing the second sum gives
\begin{align}
\bc_{j,1}=&- \ssum{k}{j}{\cN}(1+r_j r_k)(\bs_j\cdot\bs_k) \bs_j \alpha(a_j-a_k)\alpha'(a_k-a_j) \label{bcj1final}\\
&+\frac{1}2 \ssum{k}{j}{\cN} \sssum{l}{j}{k}{\cN}r_j(r_k+r_l) \big(\bs_j  \wedge(\bs_k\wedge\bs_l)\big)\big(\alpha(a_j-a_k)-\alpha(a_j-a_l)\big)\alpha'(a_k-a_l). \nonumber
\end{align}
We next compute $\bc_{j,2}$ in  \eqref{bcj2definition} by inserting the third equation in \eqref{alleqs}: 
\begin{align*}
\bc_{j,2}=&-\ssum{k}{j}{\cN} (1+r_j r_k)(\bs_j\wedge\bs_k)\alpha'(a_j-a_k) \dot{a}_j \\
& -\ssum{k}{j}{\cN} (r_j+r_k)((\bs_j\wedge\bs_k)\wedge\bb_j) \alpha'(a_j-a_k)       \\
&+\ssum{k}{j}{\cN} r_j r_k(\bs_j\wedge\bs_k) \alpha'(a_j-a_k)(\dot{a}_j-\dot{a}_k) \\
=&-\ssum{k}{j}{\cN} (\bs_j\wedge\bs_k)\alpha'(a_j-a_k) \dot{a}_j-\ssum{k}{j}{\cN} r_j r_k(\bs_j\wedge\bs_k)\alpha'(a_j-a_k)\dot{a}_k \\
& -\ssum{k}{j}{\cN} (r_j+r_k)((\bs_j\wedge\bs_k)\wedge\bb_j) \alpha'(a_j-a_k)  . 
\end{align*} 
We write as 
\begin{align*} 
\bc_{j,2}=&-\ssum{k}{j}{\cN} ((\dot{a}_j\bs_j)\wedge\bs_k)\alpha'(a_j-a_k)-\ssum{k}{j}{\cN} r_j r_k(\bs_j\wedge(\dot{a}_k\bs_k))\alpha'(a_j-a_k) \\
& -\ssum{k}{j}{\cN} (r_j+r_k)(\bs_j\wedge\bs_k)\wedge\bb_j \alpha'(a_j-a_k)
\end{align*}
and insert the last equation in \eqref{alleqs} to obtain 
\begin{align*}
\bc_{j,2}=&\ssum{k}{j}{\cN} r_j \big((\bs_j\wedge\bb_j)\wedge\bs_k\big) \alpha'(a_j-a_k)+\ssum{k}{j}{\cN} r_j (\bs_j \wedge(\bs_k\wedge\bb_k))\alpha'(a_j-a_k) \\
&- \ssum{k}{j}{\cN}(r_j+ r_k)\big((\bs_j\wedge\bs_k) \wedge\bb_j\big) \alpha'(a_j-a_k) \\
=&  \ssum{k}{j}{\cN}r_j\big( (\bs_j\wedge\bb_j)\wedge\bs_k+\bs_j\wedge(\bs_k\wedge\bb_k)-(\bs_j\wedge\bs_k)\wedge\bb_j \big)\alpha'(a_j-a_k) \\
&-\ssum{k}{j}{\cN} r_k \big((\bs_j\wedge\bs_k)\wedge\bb_j\big)\alpha'(a_j-a_k), 
\end{align*}
which can be computed with the  triple product identities $(\bx\wedge\by)\wedge\bz=(\bx\cdot\bz)\by-(\by\cdot\bz)\bx$ and $\bx\wedge(\by\wedge\bz)=(\bx\cdot\bz)\by-(\bx\cdot\by)\bz$ and $\bs_j\cdot\bb_j=0$: 
\begin{align}\label{bcj2equation1}
\bc_{j,2}=& \ssum{k}{j}{\cN}r_j\big\{ (\bs_j\cdot\bs_k)\bb_j -(\bb_j\cdot\bs_k)\bs_j+(\bs_j\cdot\bb_k)\bs_k-(\bs_j\cdot\bs_k)\bb_k \\ 
&-(\bs_j\cdot\bb_j)\bs_k+(\bs_k\cdot\bb_j)\bs_j\big\}\alpha'(a_j-a_k) \nonumber\\
&-\ssum{k}{j}{\cN} r_k \big( (\bs_j\cdot \bb_j)\bs_k - (\bs_k\cdot\bb_j)\bs_j\big)\alpha'(a_j-a_k) \nonumber\\
=& \ssum{k}{j}{\cN} \big(r_j (\bs_j\cdot\bs_k)(\bb_j-\bb_k) -   r_j(\bs_j\cdot(\bb_j-\bb_k))\bs_k +r_k(\bs_k\cdot(\bb_j-\bb_k))\bs_j\big)\alpha'(a_j-a_k). \nonumber
\end{align}
We use \eqref{bjdefinition}, $\alpha(x)=-\alpha(x)$, and the identity in \eqref{Idkey} for $x=a_l$ to compute 
\begin{align*}
(\bb_j-\bb_k)\alpha'(a_j-a_k)=&- \alpha'(a_j-a_k)\Bigg(\ssum{l}{j}{\cN} r_l\bs_l\alpha(a_j-a_l)-\ssum{l}{k}{\cN}r_l\bs_l\alpha(a_k-a_l)          \Bigg) \\
=&-(r_k\bs_k+r_j\bs_j)\alpha(a_j-a_k)\alpha'(a_j-a_k) \\
&+\sssum{l}{j}{k}{\cN}r_l\bs_l \alpha'(a_j-a_k)\big( \alpha(a_l-a_j)-\alpha(a_l-a_k)     \big) \\
=&-(r_k\bs_k+r_j\bs_j)\alpha(a_j-a_k)\alpha'(a_j-a_k)  \\
&-\sssum{l}{j}{k}{\cN}r_l\bs_l \big(  \alpha(a_k-a_j)-\alpha(a_l-a_j)      \big)\alpha'(a_l-a_k).
\end{align*}
Inserting this into \eqref{bcj2equation1} and using $\bs_j^2=0$ we find 
\begin{align*} 
\bc_{j,2}=&-\ssum{k}{j}{\cN} (1+r_jr_k) (\bs_j\cdot\bs_k)\bs_j\alpha(a_j-a_k)\alpha'(a_j-a_k) \\
&+\ssum{k}{j}{\cN} \sssum{l}{j}{k}{\cN}r_j r_l \big(  (\bs_j\cdot\bs_l)\bs_k -(\bs_j\cdot \bs_k)\bs_l \big) \big(  \alpha(a_k-a_j)-\alpha(a_l-a_j)      \big)\alpha'(a_l-a_k) \\
&-\ssum{k}{j}{\cN}  \sssum{l}{j}{k}{\cN}r_k r_l (\bs_k\cdot\bs_l)\bs_j \big(\alpha(a_k-a_j)-\alpha(a_l-a_j)\big)\alpha'(a_l-a_k). 
\end{align*}
Since $\alpha(-x)=-\alpha(x)$, the third summand is antisymmetric under the interchange of $k$ and $l$ and thus vanishes. 
Symmetrizing the second sum and inserting $(\bs_j\cdot\bs_l)\bs_k-(\bs_j\cdot\bs_k)\bs_l=\bs_j\wedge(\bs_k\wedge\bs_l)$, we obtain
\begin{align}\label{bcj2equation2}
\bc_{j,2}=&-\ssum{k}{j}{\cN} (1+r_jr_k) (\bs_j\cdot\bs_k)\bs_j\alpha(a_j-a_k)\alpha'(a_j-a_k) \\
&- \frac12\ssum{k}{j}{\cN} \sssum{l}{j}{k}{\cN}r_j (r_k+r_l)\big(\bs_j\wedge(\bs_k\wedge\bs_l)\big) \big(  \alpha(a_j-a_k)-\alpha(a_j-a_l)      \big)\alpha'(a_l-a_k).  \nonumber
\end{align}
Inserting  \eqref{bcj1final} and \eqref{bcj2equation2} into \eqref{bsjddotaj1}, the three-body terms cancel, and therefore 
\begin{equation*}
\ddot{a}_j\bs_j =- 2\ssum{k}{j}{\cN} (1+r_jr_k) (\bs_j\cdot\bs_k)\bs_j\alpha(a_j-a_k)\alpha'(a_j-a_k).
\end{equation*}
We cancel $\bs_j$ on both sides of the equation and insert $2\alpha(x)\alpha'(x)=V'(x)$ to obtain 
\begin{equation*} 
\ddot{a}_j =- \ssum{k}{j}{\cN} (1+r_jr_k) (\bs_j\cdot\bs_k)V'(a_j-a_k)
\end{equation*} 
which, by \eqref{absr1}, is equivalent to the result. 
\end{proof} 

\subsection{On spin Calogero-Moser systems and B\"acklund transformations}
\label{app:EOM}
The second-order differential equations for the poles $a_j$ and $b_j$ obtained in Appendix   \ref{proofddotaj} are remarkable since they decouple the dynamics of the two sets of variables $\{a_j,\bs_j\}_{j=1}^N$ and $\{b_j,\bt_j\}_{j=1}^N$: 
\begin{subequations} 
\label{eom} 
\begin{equation} 
\dot{\bs}_j=-2\ssum{k}{j}{N}\bs_j\wedge\bs_kV(a_j-a_k),\quad \ddot a_j = -\ssum{k}{j}{N} 4V'(a_j-a_k)
\end{equation} 
for $j=1,\ldots,N$, and 
\begin{equation} 
\dot{\bt}_j=-2\ssum{k}{j}{M}(\bt_j\wedge\bt_k)V(b_j-b_k),\quad \ddot b_j = -\ssum{k}{j}{M} 4V'(b_j-b_k)
\end{equation} 
\end{subequations} 
for $j=1,\ldots,M$. 
Moreover, as shown in Section~\ref{sec:CS_dynamics} in a representative special case, both these sets of equation are equal to the equations of motion of a spin CM model defined by the Hamiltonian
\begin{equation} 
H_{\mathrm{CM}} = \frac{1}{2}\sum_{j=1}^N p_j^2+ \sum_{1\leq j<k\leq N} 2  (\bS_j\cdot \bS_k) V(q_j-q_k)
\end{equation} 
and the Poisson brackets  \eqref{eq:poisson_brackets_rat}, and the latter is a special case of the spin CM system defined by the Hamiltonian \begin{equation}
H =  \frac{1}{2} \sum_{j=1}^N p_j^2 + \sum_{1\leq j<k\leq N} (\bv_j\cdot \bw_k)  (\bv_k\cdot \bw_j)V(q_j -q_k)
\end{equation}
and the Poisson brackets \eqref{eq:KBBT_Poisson}  \cite{gibbons1984generalisation,krichever1995spin}. 
These claims can be verified by straightforward generalizations of arguments given in the main text. 

It is natural to regard the first-order equations in Proposition~\ref{poleansatzthm2} as a B\"acklund transformation between two such spin CM models with different variable numbers $N$ and $M$, in generalization of a result due to Wojciechowski for standard CM models \cite{wojciechowski1982}. 

\section{On complex three-vectors squaring to zero}
\label{app:bs} 
In our soliton-ansatz,  we obtain spin degrees of freedom, $\bs_j\in\C^3$, which are non-zero and satisfy the contraint $\bs_j^2=0$. 
We collect and prove some basic properties about such complex three-vectors squaring to zero that we use. 

We recall that the set of all vectors in $\R^3$ with length 1 is denoted as $S^2$. 

\begin{lemma} 
\label{lem:sreps} 
Let $\bs\in\C^3$ be non-zero and such that $\bs^2=0$. Then the following statements hold true. 

\noindent (a) One can write   
\begin{equation} 
\label{sreps} 
\bs= s(\bn_1+\ii\bn_2)
\end{equation} 
with $s\in\C$ and $\bn_1,\bn_2\in S^2$ such $\bn_1\cdot\bn_2=0$, and this representation of $\bs$ is unique up to the following $\mathrm{U}(1)$ transformations, 
\begin{equation} 
\label{U1} 
(s,\bn_1,\bn_2)\to(s\ee^{\ii \alpha},\bn_1\cos\alpha+\bn_2\sin\alpha,-\bn_1\sin\alpha+\bn_2\cos\alpha) \quad (\alpha\in\R). 
\end{equation} 
(b) For any vector $\bv\in\C^3$, 
\begin{equation} 
\label{vreps} 
\bv = \tilde{v}_{1}\bs+\tilde{v}_{2}\bs^*+\tilde{v}_{3}\bs^*\wedge\bs; \quad \tilde{v}_{1}=\frac{\bs^*\cdot\bv}{\bs^*\cdot\bs},\quad \tilde{v}_{2}=\frac{\bs\cdot\bv}{\bs^*\cdot\bs},\quad \tilde{v}_{3} = \frac{(\bs\wedge\bs^*)\cdot\bv}{(\bs^*\cdot\bs)^2} . 
\end{equation} 
\end{lemma}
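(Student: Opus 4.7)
}

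For part (a), the plan is to split $\bs$ into real and imaginary parts, $\bs = \ba + \ii\bb$ with $\ba,\bb\in\R^3$. Expanding the condition $\bs^2=0$ gives
\begin{equation*}
\ba^2-\bb^2 + 2\ii\,\ba\cdot\bb = 0,
\end{equation*}
which forces $|\ba|=|\bb|$ and $\ba\cdot\bb=0$. Since $\bs\neq 0$ this common length is strictly positive, so the choice $s=|\ba|$, $\bn_1=\ba/|\ba|$, $\bn_2=\bb/|\bb|$ yields a valid representation. For the uniqueness statement, I would first check directly that the transformation \eqref{U1} leaves $s(\bn_1+\ii\bn_2)$ invariant (a short two-line computation using $\ee^{\ii\alpha}=\cos\alpha+\ii\sin\alpha$). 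Conversely, given any other representation $\bs=s'(\bn_1'+\ii\bn_2')$ satisfying the stated properties, writing $s'/s=\ee^{\ii\alpha}\lambda$ with $\lambda>0$ and comparing real and imaginary parts shows that $\lambda=1$ and that $(\bn_1',\bn_2')$ is obtained from $(\bn_1,\bn_2)$ by the rotation in \eqref{U1}.

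For part (b), the plan is to show that $\{\bs,\bs^*,\bs^*\wedge\bs\}$ is a basis of $\C^3$ and then extract the coefficients by suitable dot products. Linear independence follows from the decomposition in (a): since $\bn_1\perp\bn_2$ are orthonormal, $\bs$ and $\bs^*$ are linearly independent over $\C$, and $\bs^*\wedge\bs=-2\ii|s|^2(\bn_1\wedge\bn_2)$ is a non-zero vector orthogonal (in the real sense) to both. I would then use the following table of dot products, all of which follow from $\bs^2=0$, $(\bs^*)^2=0$ and the scalar triple product:
\begin{equation*}
\bs\cdot\bs=0,\quad \bs\cdot\bs^* = 2|s|^2,\quad \bs\cdot(\bs^*\wedge\bs)=\bs^*\cdot(\bs^*\wedge\bs)=0.
\end{equation*}
Writing $\bv=\alpha\bs+\beta\bs^*+\gamma(\bs^*\wedge\bs)$ and dotting with $\bs^*$ and $\bs$ immediately yields $\alpha=(\bs^*\cdot\bv)/(\bs^*\cdot\bs)$ and $\beta=(\bs\cdot\bv)/(\bs^*\cdot\bs)$ respectively.

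The main subtlety is extracting $\gamma$. Dotting with $\bs\wedge\bs^*$ gives $(\bs\wedge\bs^*)\cdot\bv = \gamma(\bs\wedge\bs^*)\cdot(\bs^*\wedge\bs) = -\gamma(\bs^*\wedge\bs)^2$. The identity $(\bx\wedge\by)\cdot(\bx\wedge\by) = \bx^2\by^2-(\bx\cdot\by)^2$, which is a polynomial identity and therefore valid for complex vectors, together with $\bs^2=(\bs^*)^2=0$, yields the key relation
\begin{equation*}
(\bs^*\wedge\bs)^2 = -(\bs^*\cdot\bs)^2,
\end{equation*}
so that $\gamma = (\bs\wedge\bs^*)\cdot\bv/(\bs^*\cdot\bs)^2$, in agreement with \eqref{vreps}. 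This sign bookkeeping, hinging on the fact that the square of a null-vector cross product is \emph{minus} the square of its Hermitian norm, is the only place where one must be genuinely careful; once it is pinned down the rest is routine.
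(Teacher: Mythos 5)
Your proof is correct and follows essentially the same route as the paper's: part (a) via the real/imaginary decomposition of $\bs$, and part (b) by dotting $\bv$ against $\bs^*$, $\bs$, and $\bs\wedge\bs^*$ using the (complex-valid) Lagrange identity to pin down the third coefficient. One harmless slip in a side remark: $\bs^*\wedge\bs=+2\ii|s|^2\,\bn_1\wedge\bn_2$, not $-2\ii|s|^2\,\bn_1\wedge\bn_2$, but since you use this only to establish linear independence (where only non-vanishing matters) the argument is unaffected.
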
 

\begin{proof} 
(a) Write $\bs=\re(\bs)+\ii\im(\bs)$ with $\re(\bs)$ and $\im(\bs)$ in $\R^3$, and note that $\bs^2=0$ is equivalent to $\re(\bs)^2=\im(\bs)^2$ and $\re(\bs)\cdot\im(\bs)=0$. This implies a unique representation of $\bs$ as in \eqref{sreps} with $s=|\re(\bs)|=|\im(\bs)|>0$, $\bn_1=\re(\bs)/s$ and $\bn_2=\im(\bs)/s$.  The invariance of \eqref{sreps} under the transformations \eqref{U1} follows from the obvious identity 
\begin{align*} 
s(\bn_1+\ii\bn_2) = \ee^{\ii\alpha} s\left((\bn_1\cos\alpha+\bn_2\sin\alpha)+\ii(-\bn_1\sin\alpha+\bn_2\cos\alpha)  \right) 
\end{align*}  
and the fact that $\bn_1'\coloneqq \bn_1\cos\alpha+\bn_2\sin\alpha$ and $\bn_2'\coloneqq-\bn_1\sin\alpha+\bn_2\cos\alpha$ both are in $S^2$ and satisfy $\bn_1'\cdot\bn_2'=0$, for arbitrary real $\alpha$. 

\noindent (b) The result in \eqref{vreps} follows since $\bs$, $\bs^*$ and $\bs^*\wedge\bs=2\ii|s|^2\bn_1\wedge\bn_2\coloneqq 2\ii|s|^3\bn_3$ are linearly independent; to get the formulas for $\tilde{v}_a$, $a=1,2,3$, 
multiply both sides in the first equation in \eqref{vreps} by $\bs^*$, $\bs$ and $\bs\wedge\bs^*$, respectively, and use that $\bs^2=(\bs^*)^2=0$, $\bs^*\cdot\bs=2|s|^2$,  
and $(\bx\wedge\by)\cdot(\by\wedge\bx)=(\bx\cdot\by)^2-\bx^2\by^2$. 

The result in Lemma~\ref{lem:sreps}(a) shows that each non-zero $\bs\in\C^3$ satisfying $\bs^2=0$ is associated with three vectors in $S^2$, namely $\bn_{1}$, $\bn_{2}$,  and $\bn_3=\bn_1\wedge\bn_2$, and these three vectors provide a right-handed  orthonormal basis: $\bn_j\cdot\bn_k=\delta_{jk}$ and $\bn_j\wedge\bn_k=+\epsilon_{jkl}\bn_k$ for $j,k=1,2,3$.  Moreover, due to the $\mathrm{U}(1)$-symmetry in \eqref{U1}, the basis vectors $\bn_1$ and $\bn_2$ are not unique but can be rotated in the plane spanned by them at the cost of changing the phase of $s$, whereas $\bn_3$ is uniquely determined by $\bs$. For this reason, we fix the directions $\bn_{j}\coloneqq \bn_{j,3}$ when solving the constraints \eqref{constraint} in our soliton ansatz. 
Moreover, as discussed in the main text, this direction $\bn_{j}$ has a clear physical meaning; see Fig. \ref{fig:single_soliton}.

\subsection{Proof of Lemma~\ref{lem:bs}}
\label{app:bsproof}
Since
\begin{equation*} 
\bs\cdot(\bs\wedge\bv)=0,\quad (\bs\wedge\bs^*)\cdot(\bs\wedge\bv)=\bs^2(\bs^*\cdot\bv)-(\bs\cdot\bv)(\bs^*\cdot\bs)=0, 
\end{equation*} 
this result is implied by Lemma~\ref{lem:bs}(b) and $\bs^*\cdot(\bs\wedge\bv)=(\bs^*\wedge\bs)\cdot\bv$.
\end{proof}

\section*{Acknowledgements}
\noindent
BKB and EL thank Jonatan Lenells for collaboration on a closely related project. 
RK thanks Istv\'{a}n Sz\'{e}cs\'{e}nyi for stimulating discussions.  
EL is grateful to Patrick G\'erard, Enno Lenzmann and Michael Stone for inspiring and helpful discussions. 
We thank Julien Roussillon for discussions during the initial stages of this work. 
BKB acknowledges support from the G\"oran Gustafsson Foundation and from the European Research Council, Grant Agreement No. 682537. The work of RK was supported by the grant ``Exact Results in Gauge and String Theories'' from the Knut and Alice Wallenberg foundation. 
EL acknowledges support from the Swedish Research Council, Grant No. 2016-05167, and by the Stiftelse Olle Engkvist Byggm\"astare, Contract 184-0573.

\bibliographystyle{unsrt}

\bibliography{BKL}

\end{document}